\definecolor{webgreen}{rgb}{0,.5,0}
\definecolor{webbrown}{rgb}{.6,0,0}
\def\AND{\, \wedge \,}
\def\Enn{\mathbb{N}}
\newcommand{\seqnum}[1]{\href{https://oeis.org/#1}{\rm \underline{#1}}}
\def\@fnsymbol#1{\ensuremath{\ifcase#1\or *\or \mathsection\or \mathparagraph\or \|\or **\or \ddagger\ddagger \or \dagger\dagger \else\@ctrerr\fi}}
\def\suchthat{\ : \ }
\def\add{\text{add}}
\DeclareMathOperator{\pre}{Pref}
\DeclareMathOperator{\statecomplexity}{sc}
\begin{document}

\theoremstyle{plain}
\newtheorem{theorem}{Theorem}
\newtheorem{corollary}[theorem]{Corollary}
\newtheorem{lemma}[theorem]{Lemma}
\newtheorem{proposition}[theorem]{Proposition}
\newtheorem{example}[theorem]{Example}
\newtheorem{conjecture}[theorem]{Conjecture}
\newtheorem{openproblem}{Open Problem}
\theoremstyle{definition}
\newtheorem{definition}[theorem]{Definition}
\theoremstyle{remark}
\newtheorem{remark}[theorem]{Remark}

\title{Complexity of Linear Subsequences of $k$-Automatic Sequences}

\author{Delaram Moradi\footnote{Research supported by NSERC grant RGPIN-2024-03725.}\\
School of Computer Science\\
University of Waterloo\\
Waterloo, ON  N2L 3G1 \\
Canada\\
\href{mailto:delaram.moradi@uwaterloo.ca}{\tt delaram.moradi@uwaterloo.ca}\\
\and
Narad Rampersad\footnote{Research supported by NSERC grants RGPIN-2019-04111 and RGPIN-2025-04076.}\\
Department of Math/Stats\\
University of Winnipeg\\
515 Portage Ave.\\
Winnipeg, MB R3B 2E9\\
Canada\\
\href{mailto:n.rampersad@uwinnipeg.ca}{\tt n.rampersad@uwinnipeg.ca}\\
\and 
Jeffrey Shallit$^*$\\
School of Computer Science\\
University of Waterloo\\
Waterloo, ON  N2L 3G1 \\
Canada\\
\href{mailto:shallit@uwaterloo.ca}{\tt shallit@uwaterloo.ca}}

\maketitle

\begin{abstract}
We construct automata with input(s) in base $k$ recognizing some basic relations and study their number of states. 
We also consider some basic operations on $k$-automatic sequences $(h(i))_{i \geq 0}$ and discuss their state complexity.  We find a relationship between subword complexity of the interior sequence $(h'(i))_{i \geq 0}$ and state complexity of the linear subsequence $(h(ni+c))_{i \geq 0}$.   We resolve a recent question of Zantema and Bosma about linear subsequences of $k$-automatic sequences with input in most-significant-digit-first format. We also discuss the state complexity and runtime complexity of using a reasonable interpretation of B\"uchi arithmetic to actually construct some of the studied automata recognizing relations or carrying out operations on automatic sequences.
\end{abstract}

\section{Introduction}

Determining the number of states in a minimal automaton for a formal language or automatic sequence is an extensively studied area of formal languages and automata theory called {\it state complexity}. The reader can refer to works by Yu et al.~\cite{yu_state_1994}, Gao et al.\ \cite{gao_survey_2016}, and Zantema and Bosma \cite{zantema_complexity_2022} for some examples. This paper includes discussions and results on the state complexity of specific languages and automatic sequences with input in base $k$.

The $k$-automatic sequences for integers $k\geq 2$ is an interesting class of sequences
introduced by Cobham \cite{cobham_base-dependence_1969,cobham_uniform_1972} and has been studied for more than 50 years; see 
\cite{allouche_automatic_2003} for a monograph on automatic sequences.
A sequence $(h(i))_{i \geq 0}$ is said to be {\it $k$-automatic\/} if there is a deterministic finite automaton with output (DFAO) computing the $i$-th term, given the base-$k$ representation of $i$ as input.
A DFAO is 
a $6$-tuple $M = (Q, \Sigma, \delta, q_0, \Delta, \tau)$ where
\begin{itemize} [nosep]
    \item $Q$ is a finite nonempty set of states;
    \item $\Sigma = \Sigma_k = \{0, 1,\ldots, k-1 \}$ is the input alphabet;
    \item $\delta:Q \times \Sigma \rightarrow Q$ is the transition
    function, with domain extended to
    $Q \times \Sigma^*$ in the usual way;
    \item $q_0$ is the initial state;
    \item $\Delta$ is the finite nonempty output alphabet; and
    \item $\tau:Q \rightarrow \Delta$ is
    the output function.
\end{itemize}

Recall that if $x = e_{i-1} \cdots e_0$
is a finite word, then $[x]_k$ denotes
the integer value of $x$ when considered as a base-$k$ expansion; that is, 
$$ [x]_k = \sum_{0 \leq j < i} e_j k^j.$$
Thus, for example, $[101011]_2 = 43$ for the most-significant-digit-first (msd-first) format.

We say a DFAO $M$ {\it generates\/} the sequence
${\bf h} = (h(i))_{i \geq 0}$ if
$\tau(\delta(q_0,x)) = h(i)$ for all $i \geq 0$ and all
words $x \in \Sigma_k^*$ such that $[x]_k = i$.  In particular, $M$ is expected to give the correct result {\it no matter how many leading zeros are appended to the input.} A DFAO with base-$k$ input is called  a $k$-DFAO.

\sloppy
For any
automatic sequence ${\bf h} = (h(i))_{i \geq 0}$ generated by the minimal DFAO $(Q, \Sigma_k, \delta, q_0, \Delta, \tau)$, there is a 
unique associated {\it interior sequence} ${\bf h}' = (h'(i))_{i \geq 0}$ with output values from the states of the minimal DFAO generating $(h(i))_{i \geq 0}$, defined
by taking the minimal DFAO and replacing $\tau$ by the identity map $\iota$ on $Q$.  Here ``unique'' means up to renaming of the letters.
Therefore, $(h(i))_{i \geq 0}$ is the image of
$(h'(i))_{i \geq 0}$
under the coding $q \rightarrow \tau(q)$.

For a DFAO $M=(Q, \Sigma, \delta, q_0, \Delta, \tau)$, let the DFAO  $\hat{M}=(Q, \Sigma_k, \delta, q_0, Q, \iota)$, where $\iota:Q \rightarrow  Q$ is the identity map, be the \textit{intrinsic} DFAO of $M$ and let $\hat{\bf M}$ be the sequence generated by the intrinsic DFAO $\hat{M}$.
The interior sequence $(h'(i))_{i \geq 0}$ of an automatic sequence $(h(i))_{i \geq 0}$ is the sequence generated by the intrinsic DFAO of the minimal DFAO for $(h(i))_{i \geq 0}$.  

A classical result of Cobham \cite[p.~174]{cobham_uniform_1972} states that if $(h(i))_{i \geq 0}$ is $k$-automatic, then so is the linear subsequence $(h(ni+c))_{i \geq 0}$ for
integers $n\geq 1$ and $ c \geq 0$.  In this paper, we are interested in the state complexity of linear subsequences of automatic sequences and related topics. Our results on the state complexity of linear subsequences expand on the paper by Zantema and Bosma \cite{zantema_complexity_2022} and we answer one of their open questions. More details can be found in Section \ref{section:linear-subsequences}.

One can also consider automata where the least significant digit is read first.  For least-significant-digit first (lsd-first) input, Boudet and Comon \cite{boudet_diophantine_1996} and for msd-first input, Wolper and Boigelot \cite{wolper_construction_2000} have provided constructions for base-$2$ automata recognizing arithmetical relations. Their constructions relate to some of our results in Section \ref{section:relations}; however, our results are for all bases $k$ and for specific relations.

At first glance, it might seem that results for the state complexity of problems involving msd-first and lsd-first representations would be closely related.  While this is true for arithmetic, it is not true for subsequences of automatic sequences.  Roughly speaking, this is because a deterministic finite-state transducer can carry out addition and multiplication by constants on lsd-first representations, but not msd-first representations.

Boudet and Comon \cite{boudet_diophantine_1996} and Wolper and Boigelot \cite{wolper_construction_2000} also discussed the size of the automaton for an arbitrary formula in Presburger arithmetic.
Furthermore, Amat et al.~\cite{amat_how_nodate} have developed a tool named {\tt Patapsco} that computes a lower bound for the number of states in the automaton for a given Presburger arithmetic formula; this tool computes the lower bound without actually constructing the automaton.
 
Another motivation for studying the topics in this paper is their appearance in expressions in B\"uchi arithmetic, which was implemented (for example) in the \texttt{Walnut} software system \cite{mousavi_automatic_2021} for solving problems in combinatorics on words. 
 Users of this system want to know how long some of the basic expressions can take to evaluate, and how big the resulting automata are. In this paper, we consider the automata for specific relations and automatic sequences. Here we have to examine different considerations than simply state complexity (which measures the size of the {\it smallest possible\/} finite automaton for a language or sequence), because we need to know how large the intermediate automata can be in the constructions implied by B\"uchi arithmetic.   In particular, we need to consider (a) how an expression in B\"uchi arithmetic is translated to an automaton, (b) what are the sizes of the intermediate automata created in this translation, and (c) what the total running time of the procedure is.

In Section \ref{section:background}, we introduce the required background. The rest of the paper is dedicated to discussing our results and their relevance to previously known results. In Section \ref{section:relations}, we discuss results regarding the state complexity of recognizing additive and multiplicative relations with base-$k$ input.
In Section \ref{section:linear-subsequences}, we study the state complexity of linear subsequences of a $k$-automatic sequence, and find a novel relationship between this state complexity and subword complexity.  We also solve an open problem of Zantema and Bosma.   This section contains our main novel results.
In Section \ref{section:Buchi}, we analyze the runtime complexity of constructing some of the automata studied in previous sections using an interpretation of B\"uchi arithmetic. Finally, in Section \ref{section:open-problems}, we state some open problems.

\section{Background} \label{section:background}

For the remainder of this paper, we assume that the inputs to automata are represented in base $k$ and the automatic sequences are $k$-automatic for some constant integer $k\geq 2$. Furthermore, we use $\Sigma_k$ for the set $\{0, \ldots, k-1\}$.

If $x = a_1 \cdots a_i$ and $y = b_1 \cdots b_i$ are words of the same length over the alphabet $\Sigma_k$, 
then by $x \times y$ we mean the word $[a_1, b_1] \cdots [a_i, b_i]$ over the alphabet
$\Sigma_k \times \Sigma_k$, and similarly for more than two inputs.  If the automata we discuss have multiple inputs, we assume they are encoded in parallel in this fashion.

We can specify the order the input is read:  \textit{lsd-first} stands for least significant digit first and \textit{msd-first} stands for most significant digit first. We design our msd-first (resp., lsd-first) automata such that they have the correct behaviour or output no matter how many leading zeros (resp., trailing zeros) there are in the input.

In this paper, we occasionally refer to $O(\log i)$ where $i$ can be $0$ or $1$. In these cases, we adopt the usual convention that $O(\log i) = O(1)$ for $i \in \{0,1 \}$.

The {\it canonical base-$k$ representation\/} of an integer $i$ is the word $x$, without leading zeros, such that $[x]_k = i$.  Furthermore, we use
the notation $(i)_k$ for the canonical representation of $i$ in base $k$.   The canonical representation of $0$ is $\lambda$, the empty word. Sometimes, if the context is clear, we just use $[x]$ instead of $[x]_k$ and $(i)$ instead of $(i)_k$.  We say a word $z$ is a {\it valid right extension\/} of a word $x$
if $x$ is a prefix of $z$.

In addition to DFAOs, we also need the usual notion of {\it deterministic finite automaton} (DFA), which is similar to that of DFAO.  The difference is that
the output alphabet and output function are replaced by a set of
accepting states $A$; an input $x$ is accepted if $\delta(q_0, x) \in A$
and otherwise it is rejected.

Usually, the transition function $\delta$ is taken to be a total
function from $Q \times \Sigma$ to $Q$, but in this paper, we allow
it to be a partial function in order to gracefully handle dead states. 
A state $q$ of a DFA is called {\it dead} if it is not
possible to reach any accepting state from $q$ by a 
(possibly empty) path.  A minimal DFA for a language $L$ has at most one dead state.  By convention, we allow dead states to not be counted or displayed in this paper.

Finally, we also need the notion of {\it nondeterministic finite
automaton} (NFA), which is like that of a DFA, except that the transition
function $\delta: Q \times \Sigma \rightarrow 2^Q$ has set-valued output. Furthermore, the transition function can be extended to domain $2^Q \times \Sigma$. Here an input is accepted if
$\delta(q_0,x)$ contains some element in $A$.

The state complexity of a formal language is defined as the number of states in the minimal DFA recognizing it and we define he state complexity of an automatic sequence as the number of states in the minimal DFAO generating it.

Let $\bf x$ be any infinite word. The number of distinct length-$n$ subwords present in ${\bf x}$ is denoted by the subword complexity function $\rho_{\bf x}(n)$. Here by a {\it subword} we mean a contiguous block of symbols within another word; this concept is also known as a {\it factor}.

We now recall a known result about the subword complexity of automatic sequences.
\sloppy
\begin{theorem} \label{theorem:subword-complexity}
Let ${\bf x}$ be an automatic sequence generated by an $m$-state DFAO $M = (Q, \Sigma, \delta, q_0, \Delta, \tau)$ with msd-first input and let $\hat{\bf M}$ be the sequence generated by the intrinsic DFAO
$\hat{M}=(Q, \Sigma, \delta, q_0, Q, \iota)$, where $\iota$ is the identity map.
Then $\rho_{\bf x} (n) \leq k n  \rho_{\hat{\bf M}} (2) \leq kn m^2$ for all $n \geq 1$. 
\end{theorem}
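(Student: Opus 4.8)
The plan is to bound the number of length-$n$ factors of $\mathbf{x}$ by grouping the positions $0,1,2,\dots$ into aligned blocks whose content is controlled by a single state of $M$, and then to observe that any window of length $n$ meets at most two consecutive such blocks. First I would fix $t = \lceil \log_k n \rceil$, so that $k^{t-1} < n \le k^t$, and partition $\mathbb{N}$ into the aligned blocks $B_a = [a k^t, (a+1)k^t)$ for $a \ge 0$. The first key step is to show that the restriction of the intrinsic sequence $\hat{\bf M}$ to a block $B_a$ depends only on the state $p_a := \delta(q_0, (a)_k) = \hat{M}(a)$ reached after reading the high-order digits $(a)_k$. Indeed, for $0 \le v < k^t$ the word $(a)_k \langle v\rangle_t$, where $\langle v\rangle_t$ is the length-$t$ zero-padded base-$k$ representation of $v$, is a representation of $ak^t+v$; since $M$ gives the correct value regardless of leading zeros, $\hat{M}(ak^t+v) = \delta(q_0,(a)_k\langle v\rangle_t) = \delta(p_a, \langle v\rangle_t)$, which depends only on $p_a$. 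Hence the content of $B_a$ is a word $W(p_a) \in Q^{k^t}$ determined by $p_a$ alone.

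The second key step exploits that $n \le k^t$: any length-$n$ window $[i,i+n)$ is contained in the union $B_a \cup B_{a+1}$ of two consecutive blocks, where $a = \lfloor i/k^t\rfloor$, so the associated factor of $\hat{\bf M}$ is a length-$n$ subword of $W(p_a)W(p_{a+1})$. The crucial observation — and the reason the bound features $\rho_{\hat{\bf M}}(2)$ rather than merely $m^2$ — is that the governing pair $(p_a,p_{a+1}) = (\hat{M}(a),\hat{M}(a+1))$ is exactly the length-$2$ factor of $\hat{\bf M}$ occurring at position $a$. Consequently a length-$n$ factor of $\hat{\bf M}$ is determined by just two data: the pair $(p_a,p_{a+1})$, of which there are at most $\rho_{\hat{\bf M}}(2)$, and the offset $i - ak^t \in \{0,1,\dots,k^t-1\}$, of which there are $k^t$. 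Since $\mathbf{x}$ is the image of $\hat{\bf M}$ under the coding $\tau$ applied letterwise, distinct factors of $\mathbf{x}$ force distinct factors of $\hat{\bf M}$, so
\[
\rho_{\mathbf{x}}(n) \;\le\; \rho_{\hat{\bf M}}(n) \;\le\; k^t\,\rho_{\hat{\bf M}}(2) \;\le\; kn\,\rho_{\hat{\bf M}}(2),
\]
where the last inequality uses $k^t = k\cdot k^{t-1} < kn$ (valid for all $n \ge 1$ since $k^{t-1} < n$). Finally, as $\hat{\bf M}$ takes values in the $m$-element set $Q$, there are at most $m^2$ distinct length-$2$ factors, so $\rho_{\hat{\bf M}}(2) \le m^2$, completing the chain.

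I expect the main obstacle to be the careful bookkeeping around the block alignment rather than any deep difficulty: one must verify precisely that $ak^t+v$ is represented by $(a)_k\langle v\rangle_t$, that the leading-zero convention makes $W(p_a)$ well defined even for the boundary block $a=0$ (where $(0)_k = \lambda$ and $p_0 = q_0$), and that a window of length $n \le k^t$ genuinely cannot spill into a third block (its relative span lies in $[0,2k^t)$). The degenerate case $n=1$, where $t=0$ and each block is a single position, is absorbed by the same argument since the offset set $\{0\}$ has size $k^0 = 1 < k \le kn$. Once the two-block structure and its identification with a length-$2$ factor of $\hat{\bf M}$ are established, the remainder is a direct counting argument.
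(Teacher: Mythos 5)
Your proof is correct and is essentially the same argument the paper relies on: the paper's ``proof'' is a citation to Theorem 10.3.1 of Allouche--Shallit, whose proof is precisely your block decomposition (a length-$n$ window with $k^{t-1} < n \le k^t$ lies in at most two consecutive aligned blocks of length $k^t$, so each factor is determined by a length-$2$ factor of $\hat{\bf M}$ together with an offset smaller than $k^t \le kn$). The leading-zero subtlety you flag at $a=0$ is settled by the theorem's hypothesis that $\hat{\bf M}$ is a sequence \emph{generated} by $\hat{M}$, which by the paper's definition of ``generates'' forces $\delta(q_0,\cdot)$ to be independent of the number of leading zeros.
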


\begin{proof}
See \cite[Theorem 10.3.1]{allouche_automatic_2003}.  We point out that the statement of the theorem in \cite{allouche_automatic_2003} only gives the inequality $\rho_{\bf x} (n) \leq kn m^2$, but the proof  actually presented there provides the other inequality.
\end{proof}

\section{Recognizing Relations} \label{section:relations}

We say a DFA recognizes a relation (such as the addition relation $i + j = l$) if it takes the values of the variables in parallel, represented in some base $k$, 
and accepts if and only if the relation holds.  In this section we review some of the basic arithmetic operations, what is known about their state complexity, and some new results.

\subsection{Addition}

We start by reviewing a well-known construction for recognizing the addition relation.
\begin{theorem} \label{theorem:addition-FA}
For both lsd-first and msd-first input, there is a $2$-state automaton that recognizes the relation
$[x]_k + [y]_k = [z]_k$.
\end{theorem}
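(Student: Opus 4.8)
The plan is to treat the two reading directions separately, exhibiting an explicit two-state automaton in each case and leaving the dead state implicit (as permitted by the paper's convention that dead states need not be counted). In both cases the three inputs $x,y,z$ are read in parallel as a word over $\Sigma_k \times \Sigma_k \times \Sigma_k$, so each transition reads a triple $[a,b,c]$.

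For lsd-first input I would use the textbook carry automaton. The two states $s_0$ and $s_1$ record the incoming carry $\gamma \in \{0,1\}$, and $s_0$ is both the initial state and the unique accepting state. On reading $[a,b,c]$ from state $s_\gamma$ I set $t = a + b + \gamma$: if $t \equiv c \pmod{k}$ the machine moves to $s_{\lfloor t/k\rfloor}$, and otherwise to the (uncounted) dead state. The key point is that $a + b + \gamma \le 2(k-1) + 1 = 2k-1 < 2k$, so the outgoing carry $\lfloor t/k\rfloor$ is always $0$ or $1$ and two states suffice. Correctness is the standard invariant that after reading the $j$ lowest digits the machine sits in $s_\gamma$ iff the partial addition produces carry $\gamma$; and since reading a trailing $[0,0,0]$ from $s_0$ returns to $s_0$, the answer is unaffected by trailing zeros.

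For msd-first input I would instead track the signed difference of the prefixes read so far. After reading the $j$ most significant digits, let $D_j = [z_1 \cdots z_j]_k - [x_1 \cdots x_j]_k - [y_1 \cdots y_j]_k$, so that $D_0 = 0$ and $D_j = k D_{j-1} + (c_j - a_j - b_j)$. The relation $[x]_k + [y]_k = [z]_k$ is exactly $D_n = 0$, where $n$ is the common input length. The states will be the two values $\{0,1\}$, with $0$ initial and accepting; from state $d$ on input $[a,b,c]$ the machine computes $d' = kd + c - a - b$, moving to state $d'$ if $d' \in \{0,1\}$ and to the dead state otherwise. Reading $[0,0,0]$ from state $0$ returns to state $0$, so leading zeros are harmless.

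The crux, and the only nonroutine step, is showing that confining the msd-first states to $\{0,1\}$ discards no accepting computation. The hard part is the bounding argument: writing $D_n = k^{\,n-j} D_j + R_j$, where $R_j = \sum_{i=j+1}^{n} (c_i - a_i - b_i)\, k^{\,n-i}$ collects the contribution of the unread digits, I would use $c_i - a_i - b_i \in [-(2k-2),\, k-1]$ to bound $R_j \in [-2(k^{\,n-j}-1),\, k^{\,n-j}-1]$. Imposing $D_n = 0$ forces $k^{\,n-j} D_j = -R_j$, whence $D_j \in (-1,2)$ for every $j < n$; as $D_j$ is an integer this pins it to $\{0,1\}$. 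Thus any computation leaving $\{0,1\}$ can never be completed to one with $D_n = 0$, so diverting such steps to the dead state is sound, while $D_n = 0$ is precisely the acceptance condition, giving completeness.
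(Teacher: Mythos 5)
Your proposal is correct and constructs exactly the same two automata as the paper: the lsd-first carry automaton and the msd-first automaton whose states $\{0,1\}$ track the prefix difference $[z]_k-[x]_k-[y]_k$ (the paper's $q_0,q_1$ with identical transitions, dead state left implicit). The only difference is that you supply the correctness argument---in particular the bounding of $R_j$ showing that any accepting computation stays in $\{0,1\}$---which the paper explicitly leaves to the reader.
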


\begin{proof}
    The automaton for msd-first base $2$ and lsd-first base $2$ input can be created by the methods explained, respectively, by Wolper and Boigelot \cite{wolper_construction_2000} and Boudet and Comon \cite{boudet_diophantine_1996}. Furthermore, the automaton for msd-first base $2$ can be found, for example, in Hodgson \cite{hodgson_decidabilite_1983} and the automaton for msd-first base $k$ can be found, for example, in Waxweiler \cite[Proposition 4.4.2]{waxweiler_caractere_2009}; the msd-first base $k$ construction can be easily modified to obtain the lsd-first base $k$ construction.
    
    We only need to do the addition while reading the input and keep track of the carry ($0$ or $1$) by $2$ states.
    The automaton for addition in msd-format is $M=(Q, \Sigma_k, \delta, q_0, A)$ where 
\begin{align*}
        Q &= \{q_0, q_1\} ,\\
        \delta(q, [a, b, c]) &=
                \begin{cases}
                q_0, &  \text{if $q=q_0, c=a+b$}; \\
                q_1, &  \text{if $q=q_0, c=a+b+1$}; \\
                q_1, &  \text{if $q=q_1, c=a+b-k+1$}; \\
                q_0, &  \text{if $q=q_1, c=a+b-k$}; \\
            \end{cases} \\
        A &= \{ q_0\}.
    \end{align*}

    Similarly, the automaton for addition in lsd-format is $M=(Q, \Sigma_k, \delta, q_0, A)$ where 
    \begin{align*}
        Q &= \{q_0, q_1\}, \\
        \delta(q, [a, b, c]) &=
                    \begin{cases}
                q_0, &  \text{if $q=q_0, c=a+b$}; \\
                q_0, &  \text{if $q=q_1, c=a+b+1$}; \\
                q_1, &  \text{if $q=q_1, c=a+b-k+1$}; \\
                q_1, &  \text{if $q=q_0, c=a+b-k$}; \\
            \end{cases} \\
        A &= \{ q_0\}.
    \end{align*}
Correctness is left to the reader.
\end{proof}

Now we consider what happens when we specify one summand to be a fixed constant.
\begin{theorem} \label{theorem:[x]+c=[y]-construction}
Let $c \geq 1$ be a fixed integer constant.
For both lsd- and msd-first input, there exists an automaton with $O(\log_k c)$
states accepting $x \times y$ where $[x]_k + c = [y]_k$.
\end{theorem}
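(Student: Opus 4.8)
The plan is to give separate constructions for the two input orders, treating the lsd-first case as a routine carry-tracking automaton and the msd-first case through a careful Myhill--Nerode count.

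For lsd-first, I would read the digits of $x$ and $y$ from least to most significant while maintaining the carry produced by adding $c$. Write $c = c_{t-1}\cdots c_0$ in base $k$, so that $t = \lfloor \log_k c\rfloor + 1 = O(\log_k c)$. While processing a position $j < t$, the automaton must remember both the position $j$ (in order to know the digit $c_j$) and the incoming carry, which is always $0$ or $1$ since $x_j + c_j + 1 \le 2k-1 < 2k$; this gives $O(t)$ states of the form $(j,\text{carry})$. From such a state, on input $[x_j,y_j]$ it checks whether $y_j \equiv x_j + c_j + \text{carry} \pmod k$ and advances to $(j+1,\lfloor (x_j + c_j + \text{carry})/k\rfloor)$, sending any inconsistent transition to a dead state. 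For positions $j \ge t$ we have $c_j = 0$, so only the carry matters and two further states suffice (plus the dead state), with the carry-$0$ state accepting. The total is $O(\log_k c)$ states.

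For msd-first I would instead identify the Myhill--Nerode classes directly. After reading the top $i$ digits we have prefixes $X_i,Y_i$, interpreted as $i$-digit integers, and I claim the set of valid completions depends only on the difference $D_i := Y_i - X_i$: appending lower digits $x',y'$ of common length $\ell$ yields a word in the language iff $D_i k^\ell = c + [x'] - [y']$, whose solution set is a function of $D_i$ alone. Hence the number of live states equals the number of values of $D$ admitting at least one completion, plus a dead state; the transition is $D \mapsto kD + (y_j - x_j)$, the initial value is $D = 0$ (which carries a self-loop on $[0,0]$, correctly absorbing leading zeros), and the accepting value is $D = c$.

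The crux is bounding the number of live differences. A value $D$ is live iff $|Dk^\ell - c| \le k^\ell - 1$ for some $\ell \ge 0$; I would first observe that this forces $D \ge 0$, since for $D \le -1$ the left side is at least $c + k^\ell > k^\ell - 1$. For each fixed $\ell$ the inequality confines $D$ to the interval $[\tfrac{c+1}{k^\ell} - 1,\ \tfrac{c-1}{k^\ell}+1]$, which is centered at $c/k^\ell$ and has length less than $2$, hence contains at most two integers. As $\ell$ ranges over $0,1,\dots,\lceil \log_k c\rceil$ these centers sweep from $c$ down to below $1$, contributing $O(\log_k c)$ integers, and for every larger $\ell$ the only live values contributed are $D \in \{0,1\}$. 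Summing therefore yields $O(\log_k c)$ live differences and so $O(\log_k c)$ states. The main obstacle is precisely this counting step: one must verify that the at-most-two-integers-per-$\ell$ bound, together with the geometric decay of the centers, does not secretly generate more than logarithmically many distinct values, and that the negative and the large positive differences are genuinely excluded.
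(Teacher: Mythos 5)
Your proposal is correct and takes essentially the same route as the paper: the lsd-first construction (tracking the position in $(c)_k$ together with a $0/1$ carry) is the paper's construction verbatim, and your msd-first automaton on differences with transition $D \mapsto kD + (y_j - x_j)$, initial value $0$, and accepting value $c$ is also exactly the paper's. Your count of live differences---at most two integers in the interval $\left[\tfrac{c+1}{k^\ell}-1,\ \tfrac{c-1}{k^\ell}+1\right]$ for each $\ell$---is just the forward restatement of the paper's backward count, which builds the state set by iterating $\lfloor \cdot/k\rfloor$ and $\lceil \cdot/k\rceil$ starting from $c$, adding at most two states per level over $O(\log_k c)$ levels.
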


\begin{proof}
    We first design the automaton $M_c$ accepting $[x]_k + c = [y]_k$ for msd-first input.
    The automaton for the special case of base $2$ can be created using the construction provided by Wolper and Boigelot \cite{wolper_construction_2000}, which is similar to what we do here. 
    
    On input $x \times y$, we need to keep track of the difference $D_k(x, y) = [y]_k - [x]_k$ and accept the input if and only if the difference is $c$. Each state of the automaton is a possible difference $d$, but we need to find a suitable range $I$ for $d$ so that the automaton is actually finite.
    We have
  \begin{equation}
    [yb]_k - [xa]_k = k([y]_k-[x]_k) + b - a.
    \label{equation:[yb]-[xa]}
    \end{equation}
    Therefore, if $D(x, y)$ is outside the range $[0, c]$ it will not return to it for any right extension of $x\times y$, and  we can take $I$ to be the interval $[0, c]$.
    The transition function is based on Eq.~\eqref{equation:[yb]-[xa]}.  After reading each input letter pair, the difference either stays the same or increases (unless it is no longer in the range $I$).
    Furthermore, if we consider a state $d \neq 0$, there are only two other states that have a direct transition to $d$, namely $\lfloor d/k \rfloor$ and $\lceil 
d/k \rceil$.

So we construct the automaton $M_c = (Q, \Sigma^2_k, \delta, q_0, A)$ where $\delta(d, [a, b]) = kd + b - a$, $q_0 = 0$, and $A=\{c\}$. To create the set $Q$, we start with $\{ c\}$. Then we add the states  $c' = \lfloor c/k \rfloor$ and $c'' = \lceil 
c/k \rceil$ to $Q$. We know $|c'' - c'|\leq 1$; therefore, the set $\{ \lfloor c'/k \rfloor,  \lfloor c''/k \rfloor, \lceil c'/k \rceil, \lceil c''/k \rceil\}$ has size at most $2$.
We repeat the same step (adding the floor and ceiling of division by $k$ to $Q$) until reaching the state $0$.
This step is repeated $O(\log_k c)$ times and there are always at most two new states created at each step. Therefore, the number of states in $M_c$ is $O(\log _ k c)$.

We now design the automaton $M_c$ for lsd-first input. The automaton for the special case of base $2$ can also be created using the construction provided by Boudet and Comon \cite{boudet_diophantine_1996}.

For the lsd-first input, we only need to use the simple manual addition method. We first write $(c)_k = \cdots c_1 c_0$ with $\lfloor \log_k c +1\rfloor$ letters. At first, after reading each letter pair, the automaton keeps track of the carry ($0$ or $1$) and how many letters from $(c)_k$ have been processed (the level starting from $0$). After processing all letters in $(c)_k$, in an extra final level (level $\lfloor \log_k c + 1 \rfloor$), the automaton only keeps track of the carry.

So we construct the automaton $M_c=(Q, \Sigma^2_k, \delta, q_0, A)$ as follows. The set $Q$ consists of pairs $[l, r]$ where $l$ is the level and $r$ is the carry.
\begin{align*}
Q &= \{[l, r] \in \mathbb{N}^2: l \leq \lfloor \log_k c + 1 \rfloor, r \in \{0, 1\}\}, \\
\delta([l, r], [a, b]) &=
\begin{cases}
    [l, (r+a - b)/k], &  \text{if $l= \lfloor \log_k c + 1 \rfloor $}; \\
    [l+1, (r + c_l + a - b)/k], & \text{otherwise;}
\end{cases}
 \\
q_0 &= [0, 0], \\
A &= \{ [l, r] \in Q \suchthat l= \lfloor \log_k c + 1 \rfloor, r=0\}.
\end{align*}
Note that in the transition function, if the new carry $r$ computed is not $0$ or $1$, the automaton transitions to the dead state.

The automaton has $O(\log_k c)$ levels of states and each level has two states for carry $0$ and carry $1$; therefore, the automaton has $O(\log _k c)$ states.
\end{proof}

\begin{theorem}
    Let $c \geq 1$ be a fixed constant.
    For both lsd- and msd-first base-$k$ input, there exists an automaton with  $O(\log_k c)$
    states accepting $x \times y$ where $[x]_k - c = [y]_k$.
\end{theorem}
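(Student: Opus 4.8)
The plan is to reduce this statement to the previous theorem by a single observation: the relation $[x]_k - c = [y]_k$ holds if and only if $[y]_k + c = [x]_k$. Thus the set of pairs we must accept is precisely the addition-with-constant relation of Theorem~\ref{theorem:[x]+c=[y]-construction}, but with the two input tracks interchanged. No new combinatorial work is needed; everything follows from reusing the earlier automaton.

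Concretely, I would take the automaton $M_c = (Q, \Sigma_k^2, \delta, q_0, A)$ from Theorem~\ref{theorem:[x]+c=[y]-construction} (choosing the msd-first or lsd-first version according to the desired input direction), and build a new automaton $M_c' = (Q, \Sigma_k^2, \delta', q_0, A)$ over the same state set, the same initial state, and the same accepting set, whose transition function merely swaps the coordinates of each input symbol: $\delta'(q, [a,b]) = \delta(q, [b,a])$. Running $M_c'$ on $x \times y$ is then identical to running $M_c$ on $y \times x$, so $M_c'$ accepts $x \times y$ exactly when $[y]_k + c = [x]_k$, that is, when $[x]_k - c = [y]_k$. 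Since interchanging the two components of each read symbol has no effect on whether the word is processed most-significant-digit-first or least-significant-digit-first, the reduction works verbatim for both input conventions, and $M_c'$ has exactly as many states as $M_c$, namely $O(\log_k c)$, giving the claimed bound.

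I do not expect a genuine obstacle here; the construction is essentially a relabeling of input tracks. The only point meriting a moment's care is the nonnegativity constraint: for a valid input pair of nonnegative integers, $[x]_k - c = [y]_k$ forces $[x]_k \geq c$, and this is handled automatically by the $[0,c]$ difference range (together with its out-of-range dead state) already built into $M_c$, so no additional bookkeeping is required.
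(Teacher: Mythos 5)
Your proposal is correct and takes essentially the same approach as the paper: the paper also reuses the construction of Theorem~\ref{theorem:[x]+c=[y]-construction}, merely interchanging the roles of the two input coordinates in the transition functions (tracking $[x]_k - [y]_k$ instead of $[y]_k - [x]_k$ in the msd case, and swapping $a$ and $b$ in the lsd case), which is precisely your track-swapping observation. Your handling of the nonnegativity point via the existing $[0,c]$ range and dead state is also consistent with the paper's construction.
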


\begin{proof}
    For the msd-first base $2$ input, the construction provided by Wolper and Boigelot \cite{wolper_construction_2000} can be used here too.
    For the msd-first base $k$ input, recall the automaton $M_c$ for msd-first in the proof of Theorem \ref{theorem:[x]+c=[y]-construction}. We only need to modify the previous automaton. Here, instead of $[y]_k-[x]_k$, we keep track of $[x]_k - [y]_k$ and the transition function is
    \[
    \delta(d, [a, b]) = kd + a - b.
    \]
    The automaton construction, the range $I$, and acceptance criteria of $d=c$ stay the same.

    For the lsd-first base $2$ input, the construction provided by Boudet and Comon \cite{boudet_diophantine_1996} can be used here too.
    For the lsd-first base $k$ input, the lsd-first automaton from the proof of Theorem \ref{theorem:[x]+c=[y]-construction} can be modified. The new transition function is
    \[
    \delta([l, r], [a, b]) =
    \begin{cases}
    [l, (r+b - a)/k], &  \text{if $l= \lfloor \log_k c + 1 \rfloor$}; \\
    [l+1, (r+ c_l + b - a)/k], & \text{otherwise.}
    \end{cases}
    \]
    Other components of the automaton stay the same.
\end{proof}

Recall the definition of the $k$-adic valuation: $\nu_k (n) = e$ if $k^e \mid n$ but $k^{e+1} \nmid n$. 
Note that $\nu_k(c)$ is the number of consecutive $0$s at the end of $(c)_k$. 

We also use the following notation:  if some condition is described inside brackets, the brackets are Iverson brackets. The Iverson bracket $[P]$ equals $1$ if the condition $P$ is satisfied, and $0$ otherwise.  The next theorem appears to be new.
\begin{theorem} \label{theorem:+c-sc-formula}
Let $c \geq 1$ be a fixed constant.
For msd-first input, the number of states in the minimal DFA recognizing $x \times y$ where $[x]_k + c = [y]_k$ is exactly
\begin{equation}
2|(c)_k| + 1 - \nu_k(c) - [(c)_k \ \text{starts with } \ 1] - [k=2, (c)_2 \ \text{starts with} \ 10, \ \text{and} \ c \ \text{is not a power of} \ 2].
\label{eq22}
\end{equation}
\end{theorem}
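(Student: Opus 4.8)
The plan is to analyze the msd-first difference automaton $M_c$ from the proof of Theorem~\ref{theorem:[x]+c=[y]-construction} and show that, after discarding non-co-reachable states and checking minimality, its reachable-and-co-reachable part is already minimal, so that the answer equals the number of distinct co-reachable states. Recall that $M_c$ has state set a subset of $[0,c]$, tracks the difference $d = [y]_k - [x]_k$ via $\delta(d,[a,b]) = kd + b - a$, starts at $0$, and accepts at $c$; by Eq.~\eqref{equation:[yb]-[xa]} any difference leaving $[0,c]$ can never return, so all such configurations collapse to one dead state (not counted). I would first record two facts about $M_c$ on $[0,c]$: (i) every state in $[0,c]$ is reachable from $0$, by reversing the predecessor chain $d, \lfloor d/k\rfloor, \lfloor d/k^2\rfloor,\dots,0$; and (ii) the set $S$ of co-reachable states (those from which $c$ is reachable) is exactly the predecessor closure of $c$ under the map $d \mapsto \{\lfloor d/k\rfloor, \lceil d/k\rceil\}$, i.e. the set generated in the construction of Theorem~\ref{theorem:[x]+c=[y]-construction}.

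Next I would establish minimality. A reachable state that is not co-reachable can never reach $c$ and is therefore equivalent to the dead state, contributing nothing. For two distinct co-reachable states $d_1 \ne d_2$, reading from $d_1$ a word $w$ of length $t$ that drives it to $c$ sends $d_2$ instead to $c + k^t(d_2 - d_1) \ne c$ (or out of range), since reading $w=[a_1,b_1]\cdots[a_t,b_t]$ maps any state $d$ to $k^t d + \sum_i k^{t-i}(b_i-a_i)$ with the sum independent of $d$; hence $d_1$ and $d_2$ are distinguishable. Thus the minimal DFA has exactly $|S|$ states (excluding the dead state), and the problem reduces to counting $S$.

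Finally I would count $S$ by organizing it into levels $j = 0,1,\dots,m$, where $m = |(c)_k|$ and $(c)_k = d_{m-1}\cdots d_0$, writing $F_j = \lfloor c/k^j\rfloor$ and $C_j = \lceil c/k^j\rceil$. The floor chain $F_0 > F_1 > \cdots > F_m = 0$ gives $m+1$ distinct states. Using $C_j = F_j$ for $j \le \nu_k(c)$ and $C_j = F_j+1$ for $j > \nu_k(c)$, together with the identities $\lceil C_{j-1}/k\rceil = C_j$ and $\lceil F_{\nu}/k\rceil = C_{\nu+1}$ (as the digit $d_\nu \ne 0$ for $\nu = \nu_k(c)$), I would show each $C_j$ with $\nu_k(c) < j \le m$ lies in $S$, contributing a priori $m - \nu_k(c)$ further states. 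The crux is then to count coincidences between these ceil states and the floor chain. Here the key inequality is $F_{j-1} \ge kF_j$, so $C_j = F_j+1$ can equal an earlier floor value $F_{j'}$ only when $(k-1)F_j \le 1$; this forces either $F_j = 0$ (i.e. $j=m$, giving $C_m = 1 = F_{m-1}$ exactly when the leading digit is $1$) or, only when $k=2$, $F_j = 1$ (i.e. $j = m-1$, giving $C_{m-1} = 2 = F_{m-2}$ exactly when $(c)_2$ starts with $10$, provided $C_{m-1}$ is a genuine ceil state, i.e. $m-1 > \nu_2(c)$, which is precisely the condition that $c$ is not a power of $2$). Subtracting these two indicator coincidences from $(m+1) + (m - \nu_k(c))$ yields the stated formula. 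I expect the delicate part to be this last coincidence bookkeeping—isolating the base-$2$ phenomenon and tying the power-of-$2$ exclusion to the existence of the separate ceil state $C_{m-1}$—rather than the reachability or distinguishability arguments, which are routine.
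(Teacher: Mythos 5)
Your proposal is correct, and it follows the same overall strategy as the paper: both analyze the msd-first difference automaton of Theorem~\ref{theorem:[x]+c=[y]-construction}, whose useful states are exactly the iterated floors and ceilings of $c$ under division by $k$, and both reduce the theorem to counting that set, with the same three correction terms emerging from the same exceptional coincidences near the states $0$, $1$, $2$. The differences are in execution, and in one respect your write-up is more complete than the paper's. The paper never explicitly establishes minimality: it counts the states produced by the construction, organized into ``levels,'' and argues via shortest paths and the equation $d+1 = kd + b - a$ that no further collapses occur, implicitly taking for granted that all constructed states are reachable and pairwise inequivalent. You prove this outright: reachability of every $d \in [0,c]$ from $0$ via its digit expansion, equivalence of non-co-reachable states with the dead state, and pairwise distinguishability of co-reachable differences from the affine form $d \mapsto k^{|w|} d + \mathrm{const}$ of the extended transition function---a short argument that the paper's proof also needs but omits. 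For the count itself, your bookkeeping with the chains $F_j = \lfloor c/k^j \rfloor$, $C_j = \lceil c/k^j \rceil$, and the inequality $F_{j-1} \geq k F_j$ (forcing $(k-1)F_j \leq 1$ at any coincidence, hence $F_j = 0$ for all $k$, or $F_j = 1$ only when $k=2$) is a cleaner algebraic version of the paper's level-based case analysis, and it ties the power-of-$2$ exclusion precisely to whether $C_{m-1}$ exists as a state distinct from $F_{m-1}$ (i.e., $m-1 > \nu_2(c)$), which the paper handles more informally as an overlap between its Cases 1 and 2. Your route buys rigor and a self-contained minimality proof; the paper's buys brevity and a more pictorial account of the same counting.
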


\begin{proof}
    Consider the construction of the automaton $M_c$ for msd-first input described in the proof of Theorem \ref{theorem:[x]+c=[y]-construction}.
    Let us group the new states created at each step of the construction into a \textit{level} including an initial level for the state $c$. We call a level with $1$ state a single-state level and a level with $2$ states a double-state level.

    In its simplest form, 
    our construction creates an automaton with consecutive double-state levels and an initial single-state level for $c$. At each step of the construction, the double-state levels are created by computing both the ceiling and floor of an integer divided by $k$ or the floor and ceiling of two consecutive integers divided by $k$. The number of times we need to divide $c$ by $k$ and apply floor and ceiling to get to the $0$ state is $|(c)_k|$. Therefore, in the simple form, there are $2|(c)_k| + 1$ states.
    The automaton $M_7$ for base $3$ is an example of the simple form.

    However, for some values of $k$ and $c$, some double-state levels in the simple form collapse to single-state levels, causing the number of states to be lower than the simple form. We study these cases, how they are reflected in Eq.~\eqref{eq22}, and why they are the only possible such cases.

\bigskip\noindent{\bf
    Case 1:} A positive power of $k$ divides $c$.

    If $k$ divides some integer $c'$, then $\lfloor c'/k \rfloor = \lceil c'/k \rceil$.  As a result, in this case, instead of the last level being single-state, the last $\nu_k(c)+1$ levels are single-state. So Case 1 is covered by the $-\nu_k(c)$ in Eq.~\eqref{eq22}. The automaton $M_6$ in base $3$ is an example of this case.

\bigskip\noindent{\bf 
    Case 2:} \label{case:2} The word $(c)_k$ starts with $1$, or $k=2$ and $(c)_2$ starts with $10$.

    The shortest path from state $0$ to state $c$ indicated by the transitions is $0^{|(c)_k|} \times (c)_k$. If $(c)_k$ starts with $1$, the shortest path starts with $[0, 1]$. As a result, the state $1$ is included in the automaton, but the states $0$ and $1$ cannot be on the same level; otherwise, the shortest path to $c$ does not start with $[0, 1]$. The automaton $M_4$ for base $3$ is an example.

    Similarly, if $k=2$ and $(c)_k$ starts with $10$, the shortest path starts with $[0, 1]$ followed by $[0, 0]$. As a result, the states $0, 1, 2$ are included in the automaton and not only $0$ and $1$ cannot be on the same level, but also $1$ and $2$ cannot be on the same level. The automaton $M_5$ for base $2$ is an example.
    
    If $k=2$, the word $(c)_2$ starts with $10$, and also $c$ is a power of $2$, the state reduced in the second level is already taken into account in $-\nu_k(c)$ described in Case 1. Therefore, Case 2 is covered by the $-[(c)_k \ \text{starts with } \ 1] - [k=2, (c)_k \ \text{starts with} \ 10, \ \text{and} \ c \ \text{is not a power of} \ 2 ]$ in Eq.~\eqref{eq22}, while ensuring absence of overlap with Case 1.

    \bigskip

    It remains to show that besides Cases 1 and 2, no additional reductions in the number of states compared to the simple form is possible.

    Consider a double-state level consisting of states $d$ and $d+1$ following a single-state level $d'$. In this case, $d$ transitions to $d+1$ (proof by contradiction). We have $d+1 = kd + b -a $ for some $a, b \in [0, k-1]$, $d \neq 0$, and $k \geq 2$. The only possible solutions to this equation are the following:
\begin{itemize}[nosep]
    \item $d=1,b=0, a=k-2$,
    \item $d=1, b=1, a=k-1$,
    \item $d=2, b=0, a=2k-3, k=2,$
\end{itemize}    
which are covered in Case 2.

    If some positive integer $c'$ is not divisible by $k$, then the floor and ceiling of $c'/k$ are different numbers, except for the special case already discussed in Case 2. So Case 1 covers all the other single-state levels not counted in Case 2.

    Therefore, the formula introduced in the theorem correctly computes the number of states in the minimal automaton $M_c$.
\end{proof}

\subsection{Multiplication}

In addition to providing an msd-first automaton construction for recognizing the relation $[x]_k + [y]_k = [z]_k$, Waxweiler \cite[Proposition 4.4.3]{waxweiler_caractere_2009} provided one for recognizing $n[x]_k = [y]_k$. Later,
Charlier et al.~\cite{charlier_minimal_2021} provided a construction for recognizing $n[x]_k +c = [y]_k$. Furthermore, the automaton for $n[x]_k + c =[y]_k$ with msd-first base-$2$ input (resp., lsd-first base-$2$ input) can be created by the methods explained  by Wolper and Boigelot \cite{wolper_construction_2000} (resp., Boudet and Comon \cite{boudet_diophantine_1996}). The following theorem considers the automaton accepting $n[x]_k + c = [y]_k$ in the case $c < n$ with both lsd- and msd-first base $k$ input.

\begin{theorem} \label{theorem:n[x]+c=[y]-sc}
Let $n \geq 1$, $0 \leq c < n$ be fixed constants.
For both lsd- and msd-first input, there is an $n$-state automaton accepting $x \times y$ where $n[x]_k + c = [y]_k$.
\end{theorem}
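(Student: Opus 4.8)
The plan is to give explicit constructions for the two input orderings, each maintaining a single integer state variable confined to the range $\{0,1,\ldots,n-1\}$; the hypothesis $c<n$ is exactly what keeps this variable inside an $n$-element set, so no extra states are ever needed (apart from the uncounted dead state).

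For lsd-first input I would simulate the schoolbook computation of $n[x]_k+c$ and compare its digits against those of $[y]_k$. Reading the digit pairs $[a,b]=[x_i,y_i]$ from least to most significant, the automaton tracks only the carry $\gamma$: from state $\gamma$ on reading $[a,b]$ it forms $s=na+\gamma$, goes to the dead state unless $b=s\bmod k$, and otherwise transitions to $\lfloor s/k\rfloor$. The constant $c$ is injected simply by taking the \emph{initial} carry to be $\gamma_0=c$ (legitimate precisely because $c\le n-1$), and the single accepting state is $\gamma=0$, forcing all carry to be flushed before any trailing zeros. The only routine point is the invariant $0\le\gamma\le n-1$: if $\gamma\le n-1$ then $s=na+\gamma\le n(k-1)+(n-1)=nk-1$, so $\lfloor s/k\rfloor\le n-1$. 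Hence $\{0,\ldots,n-1\}$ suffices, giving $n$ states.

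For msd-first input I would instead track the running difference $D=[y]_k-n[x]_k$ of the prefixes read so far. Since appending $[a,b]$ sends $D\mapsto kD+b-na$, the automaton uses $\delta(d,[a,b])=kd+b-na$, starts at $D=0$, sends anything leaving $\{0,\ldots,n-1\}$ to the dead state, and accepts at $D=c$. The substantive step, and the main obstacle, is showing that every prefix of an accepting input keeps $D$ inside $\{0,\ldots,n-1\}$, so that truncating to this range discards no accepting computation. For an accepting input of length $m$, writing the prefix difference $D^{(j)}$ and splitting $[X]_k$ and $[Y]_k$ into high and low parts yields the identity $D^{(j)}\,k^{m-j}=c-Y_{\mathrm{low}}+n\,X_{\mathrm{low}}$ with $X_{\mathrm{low}},Y_{\mathrm{low}}\in[0,k^{m-j})$. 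Bounding the right-hand side then forces $D^{(j)}k^{m-j}>c-k^{m-j}\ge -k^{m-j}$ and $D^{(j)}k^{m-j}\le c-n+nk^{m-j}<nk^{m-j}$, whence $-1<D^{(j)}<n$; as $D^{(j)}$ is an integer we conclude $0\le D^{(j)}\le n-1$. This is exactly where the hypothesis $c<n$ is used.

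With the range confirmed, correctness of both machines is immediate. In the msd case the automaton never dies on an input satisfying the relation and ends at $D=c$, while conversely reaching state $c$ at the end means $[Y]_k-n[X]_k=c$; in the lsd case acceptance means the digitwise simulation of $n[x]_k+c$ matched $[y]_k$ with no leftover carry. Finally I would remark that leading/trailing-zero robustness is automatic, since state $0$ loops on $[0,0]$ in each construction, so the bound of $n$ states holds for both orderings.
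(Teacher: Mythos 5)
Your proof is correct, and in outline it matches the paper's own: the same difference-tracking automaton for msd-first input (states $0,\ldots,n-1$, transition $d \mapsto kd+b-na$, accepting state $c$) and the same carry-simulation automaton for lsd-first input (initial carry $c$, accepting carry $0$). Two differences are worth recording. In the msd case the finiteness justifications differ: the paper proves one-step invariance of the complement of the window (if the prefix difference is $\leq -1$ or $\geq n$, it remains so under every extension, so such runs may safely be sent to the dead state), whereas you prove, via the exact identity $D^{(j)}k^{m-j} = c - Y_{\mathrm{low}} + nX_{\mathrm{low}}$, that inputs satisfying the relation never leave $[0,n-1]$ in the first place; both arguments are valid and of comparable difficulty, the paper's being more local and yours pinning down exactly where accepting runs live. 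In the lsd case your analysis is the correct one, and the paper's as written is flawed: the paper asserts the carry stays in $[0,c]$ and that $c+1 \leq n$ states suffice, but this fails, e.g., for $n=5$, $c=1$, $k=2$, where the accepting input with $[x]_2=1$, $[y]_2=6$ (so $5\cdot 1+1=6$) produces carry $(5\cdot 1+1-0)/2 = 3 > c$ on its very first digit pair, so the paper's automaton with state set $\{0,1\}$ dies on an input it must accept. Your invariant --- carries stay in $[0,n-1]$ because $na+\gamma \leq n(k-1)+(n-1)=nk-1$ --- is exactly what is needed, and it still yields the $n$-state bound claimed in the theorem.
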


\begin{proof}
    For the msd-first input, let $x, y $ be two words of the same length in $\Sigma_k^*$.
    We define $D_k (x,y ) = [y]_k - n [x]_k$.  The basic idea
    is to define a DFA $M_{n,c, k} = (Q,\Sigma, \delta, q_0, A)$
    such that $\delta(q_0, x \times y) = D_k(x,y)$, but some
    additional consideration is needed, since $D_k$ could
    be arbitrarily large in absolute value.  This would seem to
    require infinitely many states.
    
    However, it is easy to see that
    \begin{itemize}
    \item [(a)] If $D_k(x,y) \leq -1$, then $D_k(xa,yb) \leq -1$ for 
    $a, b \in \Sigma_k$.
    \item [(b)] If $D_k(x,y) \geq n$, then $D_k(xa,yb) \geq n$ for
    $a, b \in \Sigma_k$.
    \end{itemize}
    We can therefore take our set of states
    $Q$ to be the set of all possible
    differences; namely, $Q = \{0,1,\ldots, n-1 \}$.
    Thus, if an input to the automaton
    $x \times y$ ever causes $D_k(x,y)$ to fall outside the
    closed interval $[0,n-1]$, then no right extension of this
    input could satisfy the acceptance criteria.  So in this case we can go to the
    (unique) dead state.
    
    The transition rule for the automaton is now easy to
    compute.  Suppose $d = D_k (x,y)$.
    After reading $xa \times yb$, where $a, b \in
    \Sigma_k$ are single letters, the new difference is 
    $$[yb]_k- n [xa]_k = k[y]_k+b - n(k[x]_k + a) =
    kd+b-na.$$  
    This gives the transition rule of the automaton:
    $\delta(d, [a,b]) = kd+b-na$,
    provided $kd+b-na$ lies in the interval $[0,n-1]$; otherwise
    the automaton transitions to the dead state.    A state $d$ is
    accepting if and only if $d = c$.

    For the lsd-first input, each state corresponds to a carry in the range $[0, c]$. The initial state is $c$ to account for the $+c$. By reading $[a, b]$, the transition function computes the multiplication by $n$ and addition by the carry and the automaton transitions to the new carry state. Since $c < n$, a simple proof by induction shows that the maximum carry possible is $c \leq n-1$ and the range $[0, c]$ for states is sufficient.

    So for lsd-first format we construct the automaton $M=(Q, \Sigma_k^2, \delta, c, A)$ where 
    \begin{align*}
        Q& =\{r\in \mathbb{N} \suchthat r \in [0, c]\}, \\
        \delta(r, [a, b]) & = (na + r - b)/k, \\
        A &= \{r \in Q \suchthat r =0\}.
    \end{align*}
    This automaton has $c+1 \leq n$ states.

\end{proof}

\section{Linear Subsequences of Automatic Sequences} \label{section:linear-subsequences}

We now turn to examining the state complexity of transformations on automatic sequences.

Zantema and Bosma \cite[Theorem 16]{zantema_complexity_2022} showed that for lsd-first input, if $(h(i))_{i \geq 0}$ is a $k$-automatic sequence generated by a DFAO of $m$ states, then $(h(i+1))_{i \geq 0}$ is generated by a DFAO of at most $2m$ states. Furthermore, in the same theorem they showed that for msd-first input, the sequence $(h(i+1))_{i \geq 0}$ is generated by a DFAO of at most $m^2$ states and there is an example where this bound is tight. 
Additionally, they \cite[Theorem 17]{zantema_complexity_2022} showed that the same bounds hold for $(h(i-1))_{i \geq 0}$ where $h(0)$ is any chosen letter from the sequence alphabet.

However, they did not show that their upper bound of $2m$ states for 
$(h(i+1))_{i \geq 0}$ in the lsd-first case is tight.  We suspect it is not.  In the following result, we show there are examples attaining the very slightly weaker bound of $2m-1$ states.
\begin{theorem} \label{theorem:h(i+1)-sc-lowerbound}
    For lsd-first input, for each $m \geq 2$, 
    there exists an $m$-state DFAO $M$ generating the sequence $(h(i))_{i \geq 0}$ such that at least $2m-1$ states are required in the DFAO $M'$ generating $(h(i+1))_{i \geq 0}$. 
\end{theorem}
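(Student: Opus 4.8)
The plan is to first make explicit the product construction underlying Zantema and Bosma's upper bound of $2m$, and then to exhibit a family of $m$-state DFAOs for which this construction is essentially optimal. Given a minimal lsd-first DFAO $M=(Q,\Sigma_k,\delta,q_0,\Delta,\tau)$ for $(h(i))_{i\ge0}$, I would build an automaton $M'$ for $(h(i+1))_{i\ge0}$ whose states are pairs $(q,c)$ with $q\in Q$ and a carry flag $c\in\{0,1\}$, starting from $(q_0,1)$. Reading a digit $d$ in a carry state $(q,1)$ keeps the carry and moves to $(\delta(q,0),1)$ when $d=k-1$, and clears the carry moving to $(\delta(q,d+1),0)$ when $d<k-1$; a no-carry state $(q,0)$ simply simulates $M$, moving to $(\delta(q,d),0)$. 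The outputs are $\tau'(q,0)=\tau(q)$ and $\tau'(q,1)=\tau(\delta(q,1))$, the latter accounting for the final carried $1$. This yields $2m$ states. The crucial structural observation I would record is that the reachable carry states are exactly those of the form $(\delta(q_0,0^j),1)$, that is, they lie on the $0$-orbit of $q_0$; since appending high-order zeros must not change the output, $\tau$ is constant on this orbit, and this tension is precisely what will cap the number of useful carry states at $m-1$ and explain the $2m-1$.

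Next I would produce the witnessing family over $\Sigma_2$. For $m=2$ I would take the Thue--Morse sequence $h(i)=(\text{number of }1\text{'s in }(i)_2)\bmod 2$, whose $2$-state lsd-first DFAO is well known and for which a direct check gives $3=2m-1$ reachable, pairwise distinguishable states in $M'$. For $m\ge3$ I would use the automaton $M_m$ with state set $\{0,1,\dots,m-1\}$, initial state $0$, and output $\tau(i)=[\,i=m-1\,]$, whose transitions make the $0$-action a single $(m-1)$-cycle on the output-$0$ states together with a fixed point at the output-$1$ state, namely $\delta(i,0)=(i+1)\bmod(m-1)$ for $i\le m-2$ and $\delta(m-1,0)=m-1$, while $\delta(i,1)=i$ for $i\le m-3$ and $\delta(m-2,1)=\delta(m-1,1)=m-1$. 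Here the $0$-orbit of $0$ is exactly $\{0,\dots,m-2\}$, the output is constant on it as required, and $M_m$ is minimal because the words $0^{t}1$ with $t=(m-2-i)\bmod(m-1)$ separate the cycle states: each such word drives state $i$, and only state $i$, into the output-$1$ state.

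Finally I would prove the lower bound by the Myhill--Nerode method, which is automaton-independent, showing that the $2m-1$ reachable states $\{(i,0):0\le i\le m-1\}\cup\{(i,1):0\le i\le m-2\}$ of $M'$ are pairwise inequivalent. The $m$ no-carry states $(i,0)$ are distinguishable because they simulate the minimal automaton $M_m$. The $m-1$ carry states reduce to these: reading a single $0$ sends $(i,1)$ to $(\delta(i,1),0)$, and since $i\mapsto\delta(i,1)$ is injective on $\{0,\dots,m-2\}$ (the identity on $\{0,\dots,m-3\}$, sending $m-2$ to $m-1$), distinct carry states land on distinct, hence distinguishable, no-carry states. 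To separate a carry state from a no-carry state I would feed a block of $1$'s: along $1^{t}$ a carry state cycles among carry states and has output $1$ exactly when its index is $\equiv m-2\pmod{m-1}$, so its output is non-constant in $t$, whereas along $1^{t}$ a no-carry state is eventually constant (the states $0,\dots,m-3$ are fixed by $1$ with output $0$, and $m-2,m-1$ fall into the absorbing output-$1$ state), so the two cannot agree for all $t$. Combining these observations shows that at least $2m-1$ states are required in any lsd-first DFAO generating $(h(i+1))_{i\ge0}$.

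I expect the main obstacle to be the carry-versus-no-carry separation: it splits into a handful of residue cases (notably indices $\equiv m-2$ and $\equiv m-3\pmod{m-1}$ together with the boundary states $m-2,m-1$), and one must check that for each pair a single length of $1$-block, possibly preceded by one $0$, already yields differing output. A secondary subtlety is verifying that the construction does not accidentally degenerate: the pure $(m-1)$-cycle collapses the shifted sequence to a constant when $m=2$, which is exactly why the Thue--Morse instance is used there, and for $m\ge3$ one must confirm that $h$ is not eventually constant so that $M_m$ is genuinely minimal of size $m$ rather than smaller.
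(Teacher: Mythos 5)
Your proposal is correct, and it reaches the bound by a genuinely different route from the paper. The paper also runs a Myhill--Nerode argument over $\Sigma_2$, but with a different witness automaton --- digit $1$ acts as a full $m$-cycle and digit $0$ as a path into two fixed points ($\delta(q,0)=q+1$ for $q\le m-3$, $\delta(q,0)=q$ for $q\in\{m-2,m-1\}$, $\tau(q)=[q=m-1]$) --- and it never builds $M'$ explicitly: it just lists $2m-1$ words ($0^i$ for $1\le i\le m-2$, $1^i$ for $0\le i\le m-2$, $0^{m-2}1$, and $0^{m-2}11$) and checks pairwise distinguishability of the shifted sequence in eight cases by evaluating $M$ on incremented inputs. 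You swap the roles of the two digits (a $0$-cycle of length $m-1$ plus an absorbing output-$1$ state), make the carry product construction explicit, identify its $2m-1$ reachable states, and prove their inequivalence structurally: no-carry states inherit distinguishability from the minimality of $M$, distinct carry states drop injectively onto distinguishable no-carry states after reading one $0$, and carry is separated from no-carry because along $1^t$ the former produces periodic non-constant output while the latter is eventually constant --- note that this dichotomy already finishes the job uniformly, so the ``residue case'' obstacle you flag at the end does not actually arise. Your route buys three things the paper's does not: a structural explanation of the constant $2m-1$ (validity under trailing zeros forces $\tau$ to be constant on the $0$-orbit of $q_0$, so a non-constant sequence admits at most $m-1$ reachable carry states, sharpening the Zantema--Bosma upper bound of $2m$ to $2m-1$ for non-constant sequences); a modular inequivalence argument in place of an eight-case word computation; and coverage of $m=2$ via Thue--Morse, which the paper's proof in fact omits (it begins ``Let $k=2$, $m\ge 3$'' although the theorem asserts $m\ge 2$, and for $m=2$ its automaton degenerates to the Thue--Morse DFAO, where the listed words $1$ and $11$ become equivalent). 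What the paper's approach buys in return is economy: it requires no setup or verification of the increment construction at all, only direct evaluation of $M$ on explicit inputs.
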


\begin{proof}
We use an adaptation of the Myhill-Nerode theorem \cite[\S 3.4]{hopcroft_introduction_1979} for DFAOs. The adaptation for DFAOs is also used by Zantema and Bosma \cite{zantema_complexity_2022} and the state equivalence for DFAOs is defined by van Spaendonck \cite{van_spaendonck_automatic_2020}.

Let $k=2$, $m \geq 3$, and define the lsd-first DFAO $M$ generating $(h(i))_{i \geq 0}$ as follows.
    \begin{align*}
        Q &= \{0, \ldots, m-1\}, \\
        \delta(q, a) &=
        \begin{cases}
            (q+1) \bmod m, &  \text{if $a=1$}; \\
            q+1, &  \text{if $a=0$ and $ q \in \{0, \ldots, m-3\}$}; \\
            q, & \text{if $a=0$ and $q \in \{m-2, m-1\}$;}
        \end{cases}
         \\
        q_0 &= 0, \\
        \Delta &= \Sigma_2, \\
        \tau(q) &=
        \begin{cases}
            1 , &  \text{if $q=m-1$}; \\
            0 , &  \text{otherwise}.
        \end{cases}
    \end{align*}

Let $M'$ be the lsd-first DFAO generating the sequence $(h(i+1))_{i \geq 0}$.  
    We show that there are  $2m-1$ words $w_i$ such that for all $w_i, w_j$ where $i \neq j$, there exists $w_{ij}$ such that the output of $M'$ on input $w_iw_{ij}$ is different from  the output of $M'$ on input $w_j w_{ij}$. In this case, we say $w_i$ and $w_j$ are {\it distinguishable.}

    The $2m-1$ pairwise distinguishable words are the following.
    \begin{itemize}[nosep]
        \item $0^i$ where $1 \leq i \leq m-2$,
        \item $1^i$ where $0 \leq i \leq m-2$,
        \item $0^{m-2}1$,
        \item $0^{m-2}11$.
    \end{itemize}
    
    We need to show each pair of words is distinguishable.  There are eight cases to consider.

    \medskip

    \noindent{\it Case 1:} $0^i$ and $0^j$ where $1 \leq i < j \leq m-2$.
    Concatenate $0^{m-j-2}1$.
    The new words are $0^{i+m-j -2 }1$ and $0^{m-2}1$.
    The output of the automaton $M'$ on these words is then the same as the output of the automaton $M$ on inputs $1 0^{i-j+m-3}1$ and $1 0^{m-3}1$, which are $0$ and $1$. Therefore, the two words are distinguishable.

    Proving the rest of the cases follows the same logic.

    \medskip

    \noindent{\it Case 2:} $1^i$ and $1^j$ where $0 \leq i < j \leq m-2$.
    Concatenate $1^{m-j-2}0$. 

    \medskip

    \noindent{\it Case 3:} $0^{m-2}1$ and $0^{m-2}11$.
    Concatenate $\lambda$.

    \medskip

    \noindent{\it Case 4:} $0^i$ and $1^j$ where $1 \leq i\leq m-2$ and $0\leq j \leq m-2$. 
    Concatenate $1^{2m-i}0$.

    \medskip

    \noindent{\it Case 5:} $0^i$ and $0^{m-2}$1 where $ 1 \leq i \leq m-2$.
    Concatenate $\lambda$.

    \medskip

    \noindent{\it Case 6:} $0^i$ and $0^{m-2}$11 where $1 \leq i \leq m-2$.
    Concatenate $1^{m-i-1}$.

    \medskip

    \noindent{\it Case 7:} $1^i$ and $0^{m-2}$1 where $0 \leq i \leq m-2$.
    Concatenate $1^{m-i-1}$.

    \medskip

    \noindent{\it Case 8:} $1^i$ and $0^{m-2}11$ where $0 \leq i \leq m-2$.
    Concatenate $1^{m-i-2}$.

    \medskip

    \noindent Therefore, the DFAO $M'$ for $(h(i+1))_{i \geq 0}$ requires at least $2m-1$ states.
\end{proof}

\begin{theorem} \label{theorem:h(i+j)}
Suppose $(h(i))_{i \geq 0}$ is an automatic sequence generated by a DFAO of $m$ states with msd-first input.
Then there is a DFAO of $O(m^2)$ states with msd-first input generating the
two-dimensional automatic sequence $(h(i+j))_{i, j \geq 0 }$.
\end{theorem}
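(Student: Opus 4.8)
The plan is to build a DFAO $M'$ that reads the base-$k$ representations of $i$ and $j$ in parallel, msd-first (padded with leading zeros to a common length), and outputs $h(i+j)$. Writing $M = (Q, \Sigma_k, \delta, q_0, \Delta, \tau)$ for the given $m$-state DFAO generating $(h(i))_{i\ge0}$, the core difficulty is a direction mismatch: to know the digits of the sum $i+j$ we must resolve carries, which propagate from the least significant digit upward, yet $M$ consumes its input most significant digit first. So when $M'$ reads a digit pair $(a,b)$ at some position, it cannot yet know the carry entering that position from below.

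I would resolve this by having each state of $M'$ record a short \emph{carry profile} rather than a single state of $M$: namely a function $F\colon\{0,1\}\to Q$, equivalently a pair $(F(0),F(1))\in Q\times Q$. The intended invariant is that, after $M'$ has read the digit pairs at all positions down to and including some position $t$, the value $F(c)$ is the state $M$ would be in after consuming the sum digits from the top of $i+j$ down to position $t$, \emph{under the hypothesis that the carry entering position $t$ equals $c$}. The key observation making this finite and well defined is that, once the carry into the lowest position processed so far is fixed, the digit pairs already read determine all higher carries and hence all higher sum digits; thus only the two hypotheses $c\in\{0,1\}$ need to be carried along.

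Then I would spell out the three ingredients. For the initial state I would set $F_0(0)=q_0$ and $F_0(1)=\delta(q_0,1)$, the two possibilities for the (at most one) leading carry-out digit of the sum. For the transition, on reading a pair $(a,b)$ I would update $F$ to $F'$ by $F'(c)=\delta\bigl(F(\gamma),\,(a+b+c)\bmod k\bigr)$, where $\gamma=\lfloor (a+b+c)/k\rfloor\in\{0,1\}$ is the carry out of the current position (equivalently the carry into the position just above, which is exactly the hypothesis index recorded by the old $F$). For the output I would use $\tau(F(0))$, since the true carry entering the least significant position is $0$. Because the state set is a subset of $Q\times Q$, the automaton $M'$ has at most $m^2=O(m^2)$ states.

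The step I expect to require the most care is correctness in the presence of leading zeros, which a DFAO must tolerate. Prepending a pair $(0,0)$ does not change $i+j$, but it does change the recorded pair $(F(0),F(1))$; I would check that it never changes the reported output $\tau(F(0))$, using the fact that $M$ itself is insensitive to leading zeros, i.e.\ $\tau(\delta(q_0,0w))=\tau(\delta(q_0,w))$, so the extra leading zero in the representation of the sum is harmless. I would also verify the top end of the computation, namely that the two seed values in $F_0$ correctly account for whether the sum acquires an extra most significant digit, and confirm that the invariant is preserved by the transition rule by a short induction on the number of positions read.
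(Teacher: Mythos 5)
Your construction is correct and is essentially the paper's own proof: the paper's states are exactly your carry-profile pairs $(F(0),F(1))$ (recorded there as $[h'(s),h'(s+1)]$, where $s$ is the value of the prefix sum read so far), and its three-case transition rule, its output $\tau$ applied to the no-carry component, and its $O(m^2)$ state bound all coincide with yours. The only cosmetic difference is the initial state, $[q_0,\delta(q_0,1)]$ for you versus $[\delta(q_0,0),\delta(q_0,1)]$ in the paper; these agree when $M$ is minimal and both are consistent with the respective invariants.
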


\begin{proof}
    Let the DFAO generating the sequence $(h(i))_{i \geq 0}$ be $M=(Q, \Sigma_k, \delta, q_0, \Delta, \tau)$. We want to design a DFAO $M'$ for $(h(i+j))_{i, j \geq 0}$. The basic idea is to perform $i+j$ while reading the representation of $i$ and $j$ in parallel; however, upon reading a letter pair $[a, b]$ and before reading the next letter pair, we do not know what the letter corresponding to $[a, b]$ in $i+j$ is. For example, $[00]_3 + [01]_3 = [01]_3$ but $[01]_3 + [02]_3 = [10]_3$; the first letter pairs read are the same but the first letters in the addition results are not. So each state in the automaton $M'$ includes two states from $M$: one of them corresponds to the case where a carry is not added to $a+b$, and one to the case where a carry is added.

    More formally, we construct the automaton $M'=(Q', \Sigma_k^2, \delta', q'_0, \Delta, \tau')$ as follows.
    \begin{align*}
    Q' &= Q^2, \\
    \delta'([q, p], [a, b]) &=
    \begin{cases}
        [\delta(q, a+b), \delta (q, a+b+1)], &  \text{if $a+ b \leq k-2$}; \\
        [\delta(q, a+b), \delta(p, a+b+1-k)], &  \text{if $a+b = k-1$}; \\
        [\delta(p, a+b-k), \delta(p, a+b+1-k)], & \text{if $a+b \geq k$;}
    \end{cases}
     \\
    q'_0 &= [\delta(q_0, 0), \delta(q_0, 1)], \\
    \tau'([q, p]) &= \tau(q).
    \end{align*}

The automaton is constructed so that if $[q, p]$ is a state in the automaton reachable by $x \times y$, then $[q, p] = [h'([x]_k+[y]_k), h'([x]_k+[y]_k+1)]$. So the reachable states are length-$2$ subsequences of the interior sequence $(h'(i))_{i \geq 0}$. Therefore, based on Theorem \ref{theorem:subword-complexity}, there are $O(m^2)$ reachable states.
\end{proof}

We now turn to linear subsequences of automatic sequences.  More specifically, we consider the following problem.  Let $(h(i))_{i \geq 0}$ be a $k$-automatic sequence generated by a DFAO of $m$ states.  Let $n \geq 1$ and $c \geq 0$ be integer constants.  How many states are needed to generate 
$(h(ni+c))_{i \geq 0}$?

Zantema and Bosma \cite{zantema_complexity_2022} studied the specific case
where $n = k$ and $0 \leq c < k$.  They showed that for this case and for both msd-first and lsd-first input, the subsequence $(h(ki+c))_{i \geq 0}$ is generated by a DFAO of at most $m$ states.

In \cite[Proof of Theorem 6.8.1]{allouche_automatic_2003}, it is proved that if $(h(i))_{i \geq 0}$ is a $k$-automatic
sequence generated by an $m$-state DFAO with lsd-first input, then for $n \geq 1, c \geq0$ the linear subsequence $(h(ni+c))_{i \geq 0}$ is generated by an lsd-first DFAO with
$m(n+c)$ states.  (The theorem is phrased in terms of the $k$-kernel, but
the number of elements of the $k$-kernel of a sequence is the same as the number of states in a minimal lsd-first DFAO generating the sequence.)

In the special case that $0 \leq c < n$, this lsd-first bound was improved slightly to at most $mn$ states by
Zantema and Bosma \cite{zantema_complexity_2022}.
Furthermore, they provided examples where the same bound does not hold for msd-first input. They stated obtaining a bound in the msd-first case as an open question.  We address this in the following theorem, which provides a close and novel connection between state complexity and subword complexity.

\begin{theorem}\label{theorem:h(ni+c)-msd-sc}
Let $n \geq 1$, $c \geq 0$ be fixed integer constants.
Let ${\bf h} = (h(i))_{i \geq 0}$ be an automatic sequence generated by a DFAO of $m$ states with msd-first input, and let
${\bf h}' = (h'(i))_{i \geq 0}$ be the
interior sequence of $\bf h$.
\begin{itemize}
    \item[(a)] If $c < n$, then there is a DFAO of at most 
    $\rho_{{\bf h}'} (n)$ states
    generating $(h(ni+c))_{i \geq 0}$ with msd-first input.

    \item[(b)] If $c \geq n$, then there is a DFAO of at most
    $\rho_{{\bf h}'} (c+1)$ states
    generating $(h(ni+c))_{i \geq 0}$ with msd-first input.
    \end{itemize}
\end{theorem}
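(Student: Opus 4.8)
The plan is to reduce to the interior sequence and then have the automaton track, as it reads the digits of $i$ most-significant-first, a sliding window of consecutive values of ${\bf h}'$. Since $(h(ni+c))_{i\ge0}$ is the image under the coding $\tau$ of $(h'(ni+c))_{i \ge 0}$, it suffices to generate the latter and apply $\tau$, which does not increase the state count. Recall $h'(j) = \delta(q_0, (j)_k)$. I would set $L = \max(n, c+1)$, and declare the state, after reading a prefix $u$ of $(i)_k$ with value $p = [u]_k$, to be the length-$L$ factor $S_p = h'(np)\, h'(np+1) \cdots h'(np + L-1)$ of ${\bf h}'$ beginning at position $np$. The point of this choice is that every such $S_p$ is a genuine length-$L$ factor of ${\bf h}'$, so the number of distinct states is at most $\rho_{{\bf h}'}(L)$.

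Next I would establish the transition rule from the identity $h'(km+b) = \delta(h'(m), b)$ for $0 \le b < k$, which holds because $(km+b)_k$ is obtained by appending the digit $b$ to $(m)_k$ (using the leading-zero robustness of the minimal DFAO to cover the boundary cases). Reading a digit $e$ updates the value to $p' = kp + e$, so the new window entry indexed by $s$ is $h'(np' + s) = h'(nkp + ne + s)$. Writing $ne + s = ka + b$ with $0 \le b < k$ gives $h'(nkp + ne+s) = \delta(h'(np + a), b) = \delta(S_p[a], b)$, where $a = \lfloor (ne+s)/k\rfloor$ and $b = (ne+s)\bmod k$ depend only on $e$ and $s$. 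Hence the transition depends only on the window $S_p$ and the input digit $e$, so the automaton is well-defined on these window-states.

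The hard part, and really the only thing requiring verification, is that the window is closed under this transition, i.e.\ that the required index $a = \lfloor (ne+s)/k \rfloor$ stays in $[0, L-1]$ for every admissible $e \le k-1$ and $s \le L-1$. The maximum of $ne+s$ is $n(k-1) + (L-1)$, and the inequality $\lfloor (n(k-1)+L-1)/k\rfloor \le L-1$ is equivalent to $n \le L$, which holds by the choice of $L$. This is exactly why a window of length at least $n$ is needed, and it is where the carry into the high-order part of $ni+c$ is absorbed. Separately, I would take the output function to be $\tau'(S) = \tau(S[c])$, which is well-defined precisely because $c \le L-1$; this is the second reason for requiring $L \ge c+1$.

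Finally I would confirm correctness on inputs with leading zeros: reading a leading $0$ keeps $p = 0$ and hence fixes the state at the initial state $S_0 = h'(0)\cdots h'(L-1)$, while reading the genuine digits of $i$ builds $p$ up to $i$, so the final state is $S_i$ and the output is $\tau(S_i[c]) = \tau(h'(ni+c)) = h(ni+c)$, as required. Since no dead state is ever needed, the total state count is at most $\rho_{{\bf h}'}(L)$ with $L = \max(n,c+1)$: when $c < n$ this equals $\rho_{{\bf h}'}(n)$, proving (a), and when $c \ge n$ it equals $\rho_{{\bf h}'}(c+1)$, proving (b).
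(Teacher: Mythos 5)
Your proposal is correct and follows essentially the same construction as the paper: states are sliding windows of consecutive values of the interior sequence ${\bf h}'$ starting at position $n[u]_k$, with the same transition rule $a = \lfloor (ne+s)/k\rfloor$, $b = (ne+s) \bmod k$, the same well-definedness check, and output obtained by applying $\tau$ to the $c$-th window entry. The only cosmetic difference is that you unify the two cases via $L = \max(n, c+1)$, whereas the paper proves case (a) with window length $n$ and then notes case (b) is identical with length $c+1$.
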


\begin{proof}
Let $(h(i))_{i \geq 0}$ be generated by a DFAO
$M=(Q, \Sigma_k, \delta, q_0, \Delta, \tau)$.
\begin{itemize}
    \item[(a)] Suppose $c < n$. We want to create a DFAO generating $(h(ni+c))_{i \geq 0}$. 
The idea is that we construct $M'$ such that on input $x$ it reaches a state given by the following subword of the interior sequence $(h'(i))_{i \geq 0}$:
    $$[h'(n[x]_k), h'(n[x]_k+1), \ldots, h'(n[x]_k+n-1)].$$
    Then, since $c \leq n-1$, we only need to obtain $h'(n[x]_k+c)$ by mapping the subword $[r_0, \ldots, r_{n-1}]$ to $r_c$.

    Let $S_n(h')$ be the set of length-$n$ subwords of the sequence $(h'(i))_{i \geq 0}$. We define $M'=(Q', \Sigma_k, \delta', q_0', \Delta, \tau')$ as follows:
    \begin{align*}
        Q' &= \{[r_0, \ldots, r_{n-1}] \in S_n(h')\}, \\
        \delta'([r_0, \ldots, r_{n-1}], a) &= [s_0, \ldots, s_{n-1}] \\ & \text{where} \ s_d = \delta(r_j,b), \ j = \lfloor (na+d)/k \rfloor, \ b = (na+d) \bmod k, \\
        q'_0 &= [h'(0), h'(1), \ldots, h'(n-1)],\\
        \tau'([r_0, \ldots, r_{n-1}]) &= \tau(r_c).
    \end{align*}
The states of $M'$ are the length-$n$ subwords of the interior sequence $(h'(i))_{i \geq 0}$.  Note that not all states are necessarily reachable from the initial state.

First, let us see that
\[
s_d = \delta(r_j,b), \quad j = \lfloor (na+d)/k \rfloor, \quad b = (na+d) \bmod k
\]
in the transition function is well-defined.   We have
$0 \leq a \leq k-1$, and the definition of $b$
implies that $0 \leq b < k$.
We need to verify that $0 \leq j \leq n-1$.  We do this as follows: 
$j = \lfloor (na+d)/k \rfloor$
implies that $j \leq (n(k-1)+d)/k$.
Since $d<n$, we see this implies that $kj < nk$.  Hence $j < n$,
as desired.

Now let us prove by induction on
input length $|x|$ that 
\begin{equation}
\delta'( q'_0, x) = 
[h'(n[x]_k), \ldots, h'(n[x]_k + n-1)].
\label{equation:h(ni+c)}
\end{equation}
The base case is $x = \lambda$. Here the result follows by the definition of $q'_0$.

Otherwise, assume Eq.~\eqref{equation:h(ni+c)} is true for $|x|$; we prove it for $|x|+1$.  Write
$y = xa$.  So $[y]_k = k[x]_k + a$.  Then
$\delta'( q'_0, y) =
\delta'(\delta'(q'_0, x),a)$.
By induction
$\delta'( q'_0, x) =  [r_0, r_1,\ldots r_{n-1}]$ where $r_j = h'(n[x]_k + j)$.  By the definition of $\delta'$ we have
$\delta'(q'_0, y) = 
[s_0, s_1 \ldots, s_{n-1}]$ where
\begin{align*}
    s_d &= \delta(r_j, b) = \delta(h'(n[x]_k +j), b) 
    =\delta(q_0, k(n[x]_k+j)+b)\\
&= \delta(q_0, kn[x]_k + na+d)
= h'(kn[x]_k + na+d)
= h'(n(k[x]_k+a) + d)\\
&= h'(n [y]_k + d),
\end{align*}
as desired.

\item[(b)]  Suppose $c \geq n$.  The proof proceeds exactly as in (a), except now the states are the length-$(c+1)$ (instead of length-$n$) subwords of
$(h'(i))_{i \geq 0}$. The idea is that on input $x$, the DFAO $M'$
    reaches a state given by the
    subword 
    \begin{equation}
        [h'(n[x]_k), h'(n[x]_k+1), \ldots, h'(n[x]_k+c)].
    \end{equation}
Similarly, the states are length-$(c+1)$ subwords of the interior sequence $(h'(i))_{i \geq 0}$.

It is now not difficult to check, by exactly the same sort of calculation as in (a), that this is enough to construct the necessary subword giving the desired image, which will always be the last component of the state.
\end{itemize}
\end{proof}

\begin{corollary} \label{corollary:h(ni+c)-msd-sc}
Let $n \geq 1$, $c \geq 0$ be fixed integer constants.
Let $(h(i))_{i \geq 0}$ be an automatic sequence generated by a DFAO of $m$ states with msd-first input.   
\begin{itemize}
    \item[(a)] If $c < n$, then there is a DFAO of at most $k n m^2$ states
    generating the automatic sequence $(h(ni+c))_{i \geq 0}$ with msd-first input.
    \item[(b)] If $c \geq n$, then there is a DFAO of at most $k (c+1) m^2$ states generating the automatic sequence $(h(ni+c))_{i \geq 0}$ with msd-first input.
\end{itemize}
\end{corollary}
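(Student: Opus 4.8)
The plan is to derive the corollary directly from Theorem~\ref{theorem:h(ni+c)-msd-sc}, whose two cases already furnish DFAOs of at most $\rho_{{\bf h}'}(n)$ states (when $c<n$) and at most $\rho_{{\bf h}'}(c+1)$ states (when $c \geq n$). All that remains is to bound these two subword-complexity quantities of the interior sequence ${\bf h}'$ by $knm^2$ and $k(c+1)m^2$, respectively, and for this I would invoke Theorem~\ref{theorem:subword-complexity}.

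The key observation is that ${\bf h}'$ is itself an automatic sequence: it is generated by the intrinsic DFAO of the minimal DFAO for $\bf h$, and this DFAO has at most $m$ states, since $\bf h$ is already generated by an $m$-state DFAO. First I would note that an intrinsic DFAO is its own intrinsic DFAO, because its output function is the identity on its state set; consequently the sequence $\hat{\bf M}$ appearing in Theorem~\ref{theorem:subword-complexity}, when that theorem is applied to ${\bf x} = {\bf h}'$, is just ${\bf h}'$ itself. The theorem then yields
\[
\rho_{{\bf h}'}(\ell) \;\leq\; k\ell\,\rho_{{\bf h}'}(2) \;\leq\; k\ell\,m^2
\qquad (\ell \geq 1),
\]
where the final inequality is the trivial bound: ${\bf h}'$ takes values in a set of at most $m$ states, so it has at most $m^2$ distinct length-$2$ subwords.

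To conclude, I would specialize $\ell$. Setting $\ell = n$ gives part~(a), since Theorem~\ref{theorem:h(ni+c)-msd-sc}(a) produces a DFAO of at most $\rho_{{\bf h}'}(n) \leq knm^2$ states; setting $\ell = c+1$ gives part~(b), since Theorem~\ref{theorem:h(ni+c)-msd-sc}(b) produces a DFAO of at most $\rho_{{\bf h}'}(c+1) \leq k(c+1)m^2$ states. I do not anticipate any real obstacle; the only point deserving care is that Theorem~\ref{theorem:subword-complexity} must be applied to the interior sequence ${\bf h}'$ rather than to $\bf h$ itself, so that the quantity $\rho_{{\bf h}'}(2)$ is the one controlled directly by $m^2$.
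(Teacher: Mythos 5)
Your proposal is correct and follows essentially the same route as the paper: combine Theorem~\ref{theorem:h(ni+c)-msd-sc} with the bound $\rho_{{\bf h}'}(\ell) \leq k\ell m^2$ from Theorem~\ref{theorem:subword-complexity}, specializing $\ell = n$ and $\ell = c+1$. Your extra care in applying Theorem~\ref{theorem:subword-complexity} to ${\bf h}'$ itself (noting that an intrinsic DFAO is its own intrinsic DFAO) is a detail the paper leaves implicit, but it is the same argument.
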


\begin{proof}
\leavevmode
\begin{itemize}
    \item[(a)]
    From Theorem \ref{theorem:subword-complexity}, we know $\rho_{{\bf h}'} (n) \leq knm^2$.
    \item[(b)]
    Similarly, we know $ \rho_{{\bf h}'} (c+1) \leq k(c+1)m^2$.
    \end{itemize}
\end{proof}

We do not know if the bounds in
Corollary~\ref{corollary:h(ni+c)-msd-sc} are tight.  This seems worthy of further study, so we state it as an open problem.
\begin{openproblem}
Find a class of automatic sequences
$(h_m(i))_{i \geq 0}$, generated by a DFAO of $m$ states with msd-first input, such that there is a constant $C$ so the linear subsequences
$(h_m(ni))_{i \geq 0}$ have at least the state complexity $C n m^2$ for
infinitely many $n$.
\end{openproblem}

\subsection{Connection with Subword Complexity}

We now consider an interesting application of this connection between state complexity and subword complexity.  Given an automatic sequence ${\bf h} = (h(i))_{i \geq 0}$ we would like to obtain an upper bound on the subword complexity of its linear subsequence $(h(ni))_{i \geq 0}$. 
\begin{theorem} \label{theorem:rho-h(ni)}
   Let ${\bf h} = (h(i))_{i \geq 0}$ be a $k$-automatic sequence and
   ${\bf h}'$ its interior sequence.  Let
   ${\bf h}_n$ denote the
   linear subsequence
   $(h(ni))_{i \geq 0}$ and
   similarly for
   ${\bf h}'_n$.
   Then $\rho_{{\bf h}_n} (\ell) \leq k \ell \rho_{{\bf h}'} (2n)$ for all $l, n \geq 1$.
\end{theorem}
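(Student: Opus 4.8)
The plan is to combine part (a) of Theorem~\ref{theorem:h(ni+c)-msd-sc} with Theorem~\ref{theorem:subword-complexity}. First I would apply Theorem~\ref{theorem:h(ni+c)-msd-sc}(a) in the case $c = 0$, which is legitimate since $0 = c < n$ for every $n \geq 1$. This produces an explicit msd-first DFAO $M'$ generating ${\bf h}_n = (h(ni))_{i \geq 0}$ whose states are the length-$n$ subwords of the interior sequence ${\bf h}'$, and such that on input $x$ the automaton reaches the state $[h'(n[x]_k), h'(n[x]_k+1), \ldots, h'(n[x]_k+n-1)]$.

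Next I would identify the intrinsic sequence $\hat{\bf M'}$ of this DFAO. Since the intrinsic DFAO $\hat{M'}$ differs from $M'$ only in that its output function is the identity on the state set, the $i$-th term of $\hat{\bf M'}$ is exactly the state reached on reading $(i)_k$, namely
\[
\hat{\bf M'}(i) = [h'(ni), h'(ni+1), \ldots, h'(ni+n-1)].
\]
The key observation is then that a length-$2$ subword of $\hat{\bf M'}$, occurring at position $i$, is the pair
\[
\big( [h'(ni), \ldots, h'(ni+n-1)], \ [h'(ni+n), \ldots, h'(ni+2n-1)] \big),
\]
from which one can reconstruct the length-$2n$ block $h'(ni)\,h'(ni+1) \cdots h'(ni+2n-1)$ of ${\bf h}'$. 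Hence the map sending a length-$2$ subword of $\hat{\bf M'}$ to the corresponding length-$2n$ subword of ${\bf h}'$ is injective, which gives $\rho_{\hat{\bf M'}}(2) \leq \rho_{{\bf h}'}(2n)$.

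Finally, I would apply Theorem~\ref{theorem:subword-complexity} to the sequence ${\bf h}_n$, using the fact that ${\bf h}_n$ is generated by the DFAO $M'$; note that the theorem is stated for an arbitrary DFAO generating the sequence and does not require $M'$ to be minimal. This yields $\rho_{{\bf h}_n}(\ell) \leq k\ell\, \rho_{\hat{\bf M'}}(2)$ for all $\ell \geq 1$, and combining this with the inequality from the previous paragraph gives $\rho_{{\bf h}_n}(\ell) \leq k\ell\, \rho_{{\bf h}'}(2n)$, as required.

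The individual steps are routine; the one point deserving care — the main obstacle, such as it is — is the counting step, where one must verify that consecutive states of $M'$ overlap in precisely the right way so that a length-$2$ factor of $\hat{\bf M'}$ faithfully encodes a single length-$2n$ window of ${\bf h}'$, and that the resulting inequality $\rho_{\hat{\bf M'}}(2) \leq \rho_{{\bf h}'}(2n)$ points in the correct direction. I would also double-check the indexing conventions, confirming that position $i$ of $\hat{\bf M'}$ corresponds to the block of ${\bf h}'$ beginning at index $ni$, so that the two length-$n$ halves of the pair concatenate with neither gap nor overlap into a single length-$2n$ block.
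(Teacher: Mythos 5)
Your proof is correct and takes essentially the same route as the paper: apply the construction of Theorem~\ref{theorem:h(ni+c)-msd-sc}, invoke Theorem~\ref{theorem:subword-complexity} for the resulting (not necessarily minimal) DFAO, and bound $\rho_{\hat{\bf M}}(2)$ by $\rho_{{\bf h}'}(2n)$ via the injective correspondence between length-$2$ factors of the intrinsic sequence and length-$2n$ factors of ${\bf h}'$. The only cosmetic difference is that the paper applies the construction to ${\bf h}'$ itself (generating ${\bf h}'_n$ and prepending the inequality $\rho_{{\bf h}_n}(\ell) \leq \rho_{{\bf h}'_n}(\ell)$), whereas you apply it directly to ${\bf h}$ with $c=0$; both chains are valid.
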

\begin{proof}
    Clearly $\rho_{{\bf h}_n}(\ell) \leq \rho_{{\bf h}'_n}(\ell)$.
    Let $M=(Q, \Sigma_k, \delta, q_0, \Delta, \tau)$ be an automaton generating ${\bf h}'_n$ built by the construction in Theorem \ref{theorem:h(ni+c)-msd-sc}, and let $\hat{\bf M}$ be the sequence generated by the intrinsic DFAO  $\hat{M}=(Q, \Sigma_k, \delta, q_0, Q, \iota)$ where $\iota:Q \rightarrow  Q$ is the identity map.
    By Theorem~\ref{theorem:subword-complexity}, we have $\rho_{\bf h_n'}(\ell) \leq k \ell\rho_{\hat{\bf M}}(2)$. However, each length-$2$ subword of $\hat{\bf M}$ corresponds, from the construction of  Theorem~\ref{theorem:h(ni+c)-msd-sc}, to a distinct length-$2n$ subword of ${\bf h}'$.  Thus $\rho_{\hat{\bf M}} (2) \leq \rho_{{\bf h}'} (2n)$.  Putting this all together gives
    $$\rho_{{\bf h}_n} (\ell) \leq \rho_{{\bf h}'_n} (\ell) \leq k \ell \rho_{\hat{\bf M}}(2) \leq k \ell \rho_{{\bf h}'} (2n),$$
as desired.
 \end{proof}

\begin{remark}
  A superficially similar notion of complexity was studied by Konieczny and M\"ullner \cite{konieczny_arithmetical_2024}. However, the particular notion of complexity they studied is the number of length-$\ell$ subwords of all of the subsequences $(h(ni+c))_{i \geq 0}$ for integers $n\geq 1$ and $c \geq0$; this notion was first introduced by Avgustinovich et al.~\cite{avgustinovich_arithmetical_2003}.  In contrast, the complexity we discuss in this paper is the number of length-$\ell$ subwords of a single subsequence $(h(ni))_{i \geq0}$ with fixed $n$.
\end{remark}

\subsection{Thue-Morse Sequence}
We now apply some of the previous theorems to the famous Thue-Morse sequence ${\bf t} =(t(i))_{i \geq 0}$. The Thue-Morse sequence is
defined as the fixed point of the 
morphism $0 \rightarrow 01$,
$1 \rightarrow 10$ \cite{allouche_ubiquitous_1999}.

First we apply Theorem \ref{theorem:rho-h(ni)} to the Thue-Morse sequence.

\begin{corollary}
    Let ${\bf t}_n$ denote the linear subsequence $(t(ni))_{i \geq 0}$. We have $\rho_{{\bf t}_n}(\ell) \leq  \frac{40}{3}\ell n$ for $\ell, n \geq 1$.
\end{corollary}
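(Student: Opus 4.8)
The plan is to apply Theorem~\ref{theorem:rho-h(ni)} directly, taking $\mathbf{h} = \mathbf{t}$ to be the Thue-Morse sequence. That theorem gives the bound $\rho_{\mathbf{t}_n}(\ell) \leq k\ell\, \rho_{\mathbf{t}'}(2n)$, where $\mathbf{t}'$ is the interior sequence of $\mathbf{t}$ and $k=2$. So the entire task reduces to pinning down $\rho_{\mathbf{t}'}(2n)$, the subword complexity of the interior sequence associated to the minimal msd-first DFAO for Thue-Morse.

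First I would identify the interior sequence explicitly. The minimal msd-first DFAO for $\mathbf{t}$ has two states, and since $\tau$ is already injective on a two-state automaton (the two states output $0$ and $1$ respectively), the interior sequence $\mathbf{t}'$ is just $\mathbf{t}$ itself up to renaming of letters. Hence $\rho_{\mathbf{t}'}(m) = \rho_{\mathbf{t}}(m)$ for all $m$. The remaining ingredient is the known closed form for the subword complexity of the Thue-Morse sequence itself, due to Brlek and de~Luca--Varricchio: for $m \geq 3$ one has an exact piecewise-linear formula, and in particular $\rho_{\mathbf{t}}(m) \leq \frac{10}{3}m$ holds for all $m \geq 1$ (the constant $\tfrac{10}{3}$ being the supremum of the ratio $\rho_{\mathbf{t}}(m)/m$). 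I would cite this and substitute $m = 2n$.

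Assembling the pieces, I would write
\begin{align*}
\rho_{\mathbf{t}_n}(\ell) \;&\leq\; k\ell\,\rho_{\mathbf{t}'}(2n)
\;=\; 2\ell\,\rho_{\mathbf{t}}(2n)
\;\leq\; 2\ell \cdot \tfrac{10}{3}(2n)
\;=\; \tfrac{40}{3}\ell n,
\end{align*}
which is exactly the claimed bound. The factor $40/3$ arises transparently as $2 \cdot \tfrac{10}{3} \cdot 2 = \tfrac{40}{3}$, confirming that the constant in the statement is correct.

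The main obstacle, such as it is, lies not in any computation but in correctly justifying the two reductions: first, that the interior sequence of the minimal Thue-Morse DFAO coincides with $\mathbf{t}$ up to coding (so that $\rho_{\mathbf{t}'} = \rho_{\mathbf{t}}$), and second, securing the sharp linear bound $\rho_{\mathbf{t}}(m) \leq \tfrac{10}{3}m$ uniformly in $m$ rather than merely asymptotically. The known exact formula for $\rho_{\mathbf{t}}$ must be invoked carefully to confirm the inequality holds for \emph{all} $m\geq 1$, including the small cases $m=1,2$ where the formula differs, rather than only in the eventual linear regime. Once these two facts are in hand, the rest is the one-line substitution above.
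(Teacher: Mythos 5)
Your proposal is correct and follows essentially the same route as the paper: apply Theorem~\ref{theorem:rho-h(ni)} with $k=2$, identify the interior sequence of the Thue--Morse sequence with $\mathbf{t}$ itself (valid since the minimal two-state DFAO has injective output), and invoke the known bound $\rho_{\mathbf{t}}(m)\leq \tfrac{10}{3}m$ to get $2\ell\cdot\tfrac{10}{3}\cdot 2n = \tfrac{40}{3}\ell n$. The only difference is cosmetic: the paper cites Avgustinovich for the $\tfrac{10}{3}$ bound and leaves the interior-sequence identification implicit, whereas you spell both out.
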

\begin{proof}
    Results from Avgustinovich \cite{avgustinovich_number_1994} imply that $\rho_{\bf t}(n) \leq \frac{10}{3} n$. So by Theorem \ref{theorem:rho-h(ni)},
we have $\rho_{{\bf t} _n}(\ell) \leq 2 \cdot \ell \cdot \frac{10}{3} \cdot 2n = \frac{40}{3}\ell n$.
\end{proof}

Now we apply Theorem \ref{theorem:h(ni+c)-msd-sc} to the Thue-Morse sequence $(t(i))_{i \geq 0}$ and $c=0$. 

\begin{theorem} \label{theorem:t-sc}
    For all integers $n \geq 1$, the number of states in the minimal DFAO generating $(t(ni))_{i \geq 0}$ with msd-first input is $\rho_{\bf t}(n/\nu_2(n))$.
\end{theorem}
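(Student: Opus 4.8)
The plan is to reduce to the case of odd $n$, use Theorem~\ref{theorem:h(ni+c)-msd-sc}(a) for the upper bound, and then prove a matching lower bound by showing that the automaton produced by that construction is already minimal. Throughout I would use the two standard identities for the Thue--Morse sequence, namely $t(2i)=t(i)$ and $t(2^{k}r+s)=t(r)\oplus t(s)$ whenever $0\le s<2^{k}$, both of which follow from the fact that $t(i)$ is the parity of the number of $1$'s in the base-$2$ representation of $i$. Writing $m=n/2^{\nu_2(n)}$ for the odd part of $n$, the identity $t(2i)=t(i)$ gives $t(ni)=t(mi)$ for all $i$, so $(t(ni))_{i\ge0}=(t(mi))_{i\ge0}$ and it suffices to treat odd $n=m$. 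Since the minimal DFAO for ${\bf t}$ has two states with distinct outputs, its coding $\tau$ is a bijection, so the interior sequence ${\bf t}'$ equals ${\bf t}$ up to renaming and $\rho_{{\bf t}'}(\ell)=\rho_{\bf t}(\ell)$. Applying Theorem~\ref{theorem:h(ni+c)-msd-sc}(a) with $c=0<n$ then yields a DFAO with at most $\rho_{{\bf t}'}(n)=\rho_{\bf t}(n)$ states, giving the upper bound.

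For the lower bound I would analyze the automaton $M'$ of that construction directly. Its states are length-$n$ factors of ${\bf t}$, the state reached on input $x$ is the window $W(j)=(t(nj),\dots,t(nj+n-1))$ with $j=[x]_2$, the transition is $j\mapsto 2j+a$, and (since $c=0$) the output of $W(j)$ is $t(nj)$. The theorem then reduces to two claims: \emph{reachability}, that every length-$n$ factor of ${\bf t}$ occurs as some $W(j)$, and \emph{distinguishability}, that distinct windows are Myhill--Nerode inequivalent. Together these show $M'$ has exactly $\rho_{\bf t}(n)$ states and is minimal.

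To prove reachability, fix a factor $F$ occurring at position $p$ and pick $k$ with $2^{k}\ge p+n$, so that $F$ occurs at $2^{k}r+p$ for every $r$ with $t(r)=0$. Since $n$ is odd, $2^{k}$ is invertible modulo $n$, so I only need an $r$ with $t(r)=0$ in the residue class $a\equiv -p(2^{k})^{-1}\pmod n$; then $2^{k}r+p$ is a multiple of $n$ and $F=W(j)$ for $j=(2^{k}r+p)/n$. The key lemma is that every residue class modulo $n$ contains an integer with an even number of $1$'s, which I would deduce from the existence of a multiple of $n$ with an odd number of $1$'s: taking $D$ to be the multiplicative order of $2$ modulo $n$, the number $w=\sum_{i=0}^{n-1}2^{iD}$ satisfies $w\equiv n\equiv0\pmod n$ and has exactly $n$ (hence an odd number of) ones, so adding a high shift $2^{M}w$ to any class representative flips the parity of its digit sum while preserving its class. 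I expect this lemma to be the main obstacle, since it is where the arithmetic of odd $n$ genuinely enters.

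For distinguishability, suppose $W(j)\ne W(j')$, say $t(nj+d_0)\ne t(nj'+d_0)$ for some $0\le d_0<n$. I would choose $K$ with $2^{K}\ge n$, take the residue $s\equiv-2^{K}d_0\pmod n$ with $0\le s<2^{K}$, and set $b=(2^{K}d_0+s)/n$, which is an integer with $0\le b<2^{K}$. Reading any word of length $K$ and value $b$ sends $W(j)$ to $W(2^{K}j+b)$, whose output is
$$t\bigl(n(2^{K}j+b)\bigr)=t\bigl(2^{K}(nj+d_0)+s\bigr)=t(nj+d_0)\oplus t(s),$$
using $s<2^{K}$. The analogous computation for $j'$ gives $t(nj'+d_0)\oplus t(s)$, and these differ by the choice of $d_0$, so $W(j)$ and $W(j')$ are distinguishable. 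Combining the two claims shows $M'$ is minimal with $\rho_{\bf t}(n)=\rho_{\bf t}(m)$ states, where $m=n/2^{\nu_2(n)}$ is the odd part of $n$, completing the proof.
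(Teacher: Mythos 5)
Your proof is correct, and its skeleton matches the paper's: reduce to the odd part of $n$ via $t(2i)=t(i)$, get the upper bound from the construction of Theorem~\ref{theorem:h(ni+c)-msd-sc}(a) (noting, as you do, that the interior sequence of $\bf t$ is $\bf t$ up to renaming), and prove minimality by reachability plus distinguishability of the window states. Your distinguishability argument is essentially the paper's (Theorem~\ref{theorem:t-construction-optimal}), only more explicit: where the paper asserts that some word $z$ with $\lfloor n[z]/2^{|z|}\rfloor = d_0$ exists ``by iterating different $z$,'' you construct the witness $b=(2^K d_0+s)/n$ directly; both arguments hinge on the identity $t(2^K u+s)=t(u)\oplus t(s)$ for $0\le s<2^K$. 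The genuine divergence is in reachability. The paper black-boxes it (Lemma~\ref{lemma:t(ni)-reachability}) by citing Lemma~12 of Blanchet-Sadri et al.~\cite{blanchet-sadri_abelian_2014}, which says every factor of $\bf t$ occurs at a position divisible by any given odd $n$. You prove this from scratch: a factor at position $p$ recurs at $2^k r+p$ whenever $t(r)=0$ and $2^k\ge p+n$, so it suffices that the class $-p(2^k)^{-1} \bmod n$ contains an integer with evenly many $1$ bits, which you obtain from the multiple $w=\sum_{i=0}^{n-1}2^{iD}$ of $n$ ($D$ the order of $2$ modulo $n$), which has exactly $n$ (an odd number of) ones, by adding a non-overlapping shift $2^M w$ to a class representative to correct the parity. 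This argument is sound, and it buys self-containedness---the reader need not verify that an external lemma stated in a different context applies---at the cost of the extra arithmetic the citation avoids. One cosmetic point: the theorem's expression $n/\nu_2(n)$ must be read as $n/2^{\nu_2(n)}$, the odd part of $n$, exactly as you interpreted it.
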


In order to prove Theorem \ref{theorem:t-sc}, we first need Lemma \ref{lemma:t(ni)-reachability} and Theorem \ref{theorem:t-construction-optimal}.

\begin{lemma} \label{lemma:t(ni)-reachability}
    For an odd integer $n$, all length-$n$ subwords of the Thue-Morse sequence $\bf t$ are reachable states in the automaton generating $(t(ni))_{i \geq 0}$ with msd-first input built by the construction provided in Theorem \ref{theorem:h(ni+c)-msd-sc}.
\end{lemma}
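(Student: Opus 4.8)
The plan is to translate reachability into a statement purely about occurrences of factors of $\mathbf t$. By the correctness argument in the proof of Theorem~\ref{theorem:h(ni+c)-msd-sc} (applied with $c=0<n$), the state reached by the constructed DFAO on an input $x$ is the block $[h'(n[x]_2),\dots,h'(n[x]_2+n-1)]$, and as $[x]_2$ ranges over all of $\mathbb{N}$ the reachable states are exactly the length-$n$ factors of $\mathbf h'$ occurring at the positions $0,n,2n,\dots$. For Thue--Morse the minimal DFAO has two states and the coding $\tau$ is a bijection, so $\mathbf h'$ is merely a renaming of $\mathbf t$ and length-$n$ factors of $\mathbf h'$ correspond bijectively to length-$n$ factors of $\mathbf t$. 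Thus the lemma is equivalent to the assertion that \emph{every length-$n$ factor of $\mathbf t$ occurs at some position divisible by $n$.}

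The main tool I would use is the dyadic self-similarity $t(c2^{J}+s)=t(c)\oplus t(s)$, valid for $0\le s<2^{J}$, which holds because adding $c2^{J}$ to $s$ creates no carries, so the binary digit sums add. Given a factor $w$ of length $n$, I would fix its first occurrence position $p_0$ and choose $J$ with $2^{J}>p_0+n$, so that $w$ lies inside the initial dyadic block, $w=t[p_0\,..\,p_0+n-1]$. I then seek $m$ with $nm\equiv p_0 \pmod{2^{J}}$; since $n$ is odd it is invertible modulo $2^{J}$, so such $m$ exist and form a residue class $m=m_0+2^{J}\ell$. For these $m$ one checks that $\lfloor nm/2^{J}\rfloor=c_0+n\ell$ for a fixed integer $c_0\ge 0$, and the self-similarity identity then gives $t(nm+r)=t(c_0+n\ell)\oplus t(p_0+r)$ for all $0\le r<n$. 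Hence if I can choose $\ell\ge 0$ with $t(c_0+n\ell)=0$, then $t[nm\,..\,nm+n-1]=w$, so $w$ occurs at the multiple $nm$ of $n$, as required.

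What remains is to show that the arithmetic progression $(c_0+n\ell)_{\ell\ge0}$, whose common difference $n$ is odd, contains a term of Thue--Morse value $0$; in fact I would prove the stronger statement that $\mathbf t$ is non-constant along every progression with odd difference. Writing $t(x+y)=t(x)\oplus t(y)\oplus(\gamma\bmod 2)$, where $\gamma$ is the number of carries in the base-$2$ addition of $x$ and $y$, the case $t(n)=1$ is immediate: for a term $M$ of the progression and $2^{j}>M$, the sum $M+n2^{j}$ (again a term of the progression) is carry-free, so $t(M+n2^{j})=t(M)\oplus 1$. The case $t(n)=0$ requires carries: I would choose a progression term $M$ whose lowest $B'$ bits are all $1$ and whose bit $B'$ is $0$ (this is a single congruence modulo $2^{B'+1}$, solvable because $n$ is odd), with $B'$ odd; a short digit-sum computation shows that adding $n$ to such an $M$ produces exactly $B'$ carries, whence $t(M+n)=t(M)\oplus t(n)\oplus 1=t(M)\oplus 1$. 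Either way both values occur, so in particular the value $0$ is attained along the progression. This non-constancy is also a classical consequence of Gelfond's theorem on digit sums in arithmetic progressions and could be cited instead.

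The step I expect to be the real obstacle is this last one, namely controlling the parity of the Thue--Morse value along $(c_0+n\ell)_{\ell\ge0}$, i.e.\ the carry analysis in the $t(n)=0$ case. By contrast, the reduction itself---reachable states equal factors at multiples of $n$, the dyadic identity, and the invertibility of $n$ modulo $2^{J}$---is routine bookkeeping once the self-similarity identity is in hand.
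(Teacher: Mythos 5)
Your proposal is correct in substance but takes a genuinely different route from the paper. Both arguments start with the same reduction: by the correctness equation of Theorem \ref{theorem:h(ni+c)-msd-sc} (with $c=0<n$), the reachable states are exactly the length-$n$ factors of $\mathbf{t}$ (alias $\mathbf{h}'$, since the minimal $2$-DFAO's coding is a bijection) occurring at indices divisible by $n$, so the lemma amounts to showing that every length-$n$ factor of $\mathbf{t}$ occurs at some index divisible by $n$. At that point the paper simply cites Lemma~12 of Blanchet-Sadri, Currie, Rampersad, and Fox \cite{blanchet-sadri_abelian_2014} (taking $i=0$ in their notation), which asserts precisely this occurrence property for odd $n$. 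You instead prove it from scratch: the carry-free identity $t(c2^J+s)=t(c)\oplus t(s)$ for $0\le s<2^J$, the invertibility of odd $n$ modulo $2^J$, and the non-constancy of $\mathbf{t}$ along arithmetic progressions with odd difference (established by a carry-parity computation, or citable from Gelfond's theorem on digit sums). Your route buys self-containedness and exposes exactly which structural property of Thue--Morse is used; the paper's buys brevity. One small repair is needed in your $t(n)=0$ case: ``exactly $B'$ carries'' holds only if the carry propagation stops at bit $B'$, which requires bit $B'$ of $n$ to be $0$ in addition to bit $B'$ of $M$; if $B'$ is smaller than the bit length of $n$, the carry can run past position $B'$ and the carry count then depends on the unspecified higher bits of $M$. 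The fix is immediate, since you are free to choose $B'$: take $B'$ odd and at least $\lfloor \log_2 n\rfloor +1$, and the computation goes through as you describe.
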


\begin{proof}
    We use results by Blanchet-Sadri et al.~\cite[Lemma~12]{blanchet-sadri_abelian_2014} by setting $i=0$ in their lemma. As a result, for each subword of the Thue-Morse sequence and odd integer $n$, there exists integer $i'$ such that the subword occurs at index $ni'$ of the Thue-Morse sequence; in other words, the state corresponding to the subword is reachable by input $(i')_2$.
\end{proof}

\begin{theorem} \label{theorem:t-construction-optimal}
    For all odd integers $n \geq 1$, the construction provided in Theorem \ref{theorem:h(ni+c)-msd-sc} is optimal for $(t(ni))_{i \geq 0}$.
\end{theorem}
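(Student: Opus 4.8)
The plan is to show that the DFAO $M'$ produced by the construction of Theorem~\ref{theorem:h(ni+c)-msd-sc}(a) for $(t(ni))_{i\geq 0}$ (with $c=0$, legitimate since $0<n$) is already minimal. Since the minimal DFAO for $\bf t$ has two states and its output coding $\tau$ is a bijection, the interior sequence $\bf h'$ is just $\bf t$ up to renaming of letters; hence the states of $M'$ are exactly the length-$n$ factors of $\bf t$, the output of the state $[r_0,\dots,r_{n-1}]$ is its first symbol $r_0$, and $M'$ has at most $\rho_{\bf t}(n)$ states. By Lemma~\ref{lemma:t(ni)-reachability}, for odd $n$ every length-$n$ factor is reachable, so $M'$ has exactly $\rho_{\bf t}(n)$ states. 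It therefore suffices, by the Myhill--Nerode characterization for DFAOs, to prove that any two distinct states are distinguishable; this forces $M'$ to be minimal and the construction to be optimal.

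To analyze outputs I would first record that the output of $M'$ on an input word $z$ is $t(n[z]_2)$, and more generally that if the state equal to a factor $u$ is reached by reading $x_u$ with $n[x_u]_2 = s_u$, then on a suffix $w$ of length $\ell$ and value $W=[w]_2$ the output is $t\bigl(2^\ell s_u + nW\bigr)$. By Lemma~\ref{lemma:t(ni)-reachability} I may choose the occurrence so that $s_u$ is a multiple of $n$ and $u = t(s_u)\cdots t(s_u+n-1)$. The engine of the argument is the classical fact that $t(m)$ is the parity of the binary digit sum of $m$, together with a carry-free splitting: writing $nW = 2^\ell p + R$ with $0\le R<2^\ell$, one has $2^\ell s_u + nW = 2^\ell(s_u+p)+R$, whose low $\ell$ bits are exactly $R$, so there is no carry between the two blocks and $t\bigl(2^\ell s_u+nW\bigr)=t(s_u+p)\oplus t(R)$.

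Given distinct factors $u\neq v$, I would pick a coordinate $p$ with $u_p\neq v_p$, i.e.\ $t(s_u+p)\neq t(s_v+p)$. The remaining step is to realize this $p$ as $\lfloor nW/2^\ell\rfloor$: choosing $\ell$ with $2^\ell>n$ makes the interval $[\,p2^\ell,(p+1)2^\ell)$ longer than $n$, so it contains a multiple $nW$ of $n$, giving $W$ and $\ell$ with $\lfloor nW/2^\ell\rfloor = p$ and remainder $R$. Applying the splitting identity to both $s_u$ and $s_v$ with this common $w$, the two outputs become $u_p\oplus t(R)$ and $v_p\oplus t(R)$, which differ precisely because $u_p\neq v_p$. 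Thus $u$ and $v$ are distinguishable.

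I expect the main obstacle to be the bookkeeping that makes the splitting $t(2^\ell s+nW)=t(s+p)\oplus t(R)$ genuinely carry-free and independent of the occurrence position $s$ — this is what lets a single distinguishing word $w$ work simultaneously for $u$ and $v$ while depending only on the differing coordinate $p$. One must also confirm that the output-on-input formula $t(2^\ell s_u+nW)$ really is a function of the state (content) alone, which is exactly the content-based correctness of $\delta'$ established in Theorem~\ref{theorem:h(ni+c)-msd-sc}, and that a valid occurrence with $s_u$ a multiple of $n$ exists, which is the content of Lemma~\ref{lemma:t(ni)-reachability}. Once these are in place, reachability together with pairwise distinguishability yield minimality, so the construction attains $\rho_{\bf t}(n)$ states and is optimal.
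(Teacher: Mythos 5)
Your proposal is correct and takes essentially the same approach as the paper's proof: reachability of every length-$n$ factor via Lemma~\ref{lemma:t(ni)-reachability}, then pairwise distinguishability of states using the carry-free splitting $t(2^\ell s_u + nW) = t(s_u + p)\oplus t(R)$ and a suffix $w$ chosen so that $\lfloor nW/2^\ell\rfloor$ equals the differing coordinate $p$. Your interval/pigeonhole argument for the existence of such $w$ is just a slightly more explicit rendering of the paper's ``iterating different $z$'' step.
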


\begin{proof}
    Consider the automaton created by the construction for $(t(ni))_{i \geq 0}$. On input $x$ the automaton reaches the state $[t(n[x]), \ldots, t(n[x]+n-1)] $
    where the output is $t(n[x])$.
    Furthermore, by Lemma \ref{lemma:t(ni)-reachability} we know all length-$n$ subwords of the Thue-Morse sequence are reachable states in the automaton.
    
    Now suppose we have two distinct states in this automaton reachable by $x$ and $y$:
    \begin{align*}
        [t(n[x]), \ldots, t(n[x]+n-1)] \neq
        [t(n[y]), \ldots, t(n[y]+n-1)].
    \end{align*}
    We need to show they are distinguishable.
    
    If $t(n[x]) \neq t(n[y])$, the distinguishability is trivial. If $t(n[x]) = t(n[y])$, there is $0 < j < n$ such that $t(n[x] + j ) \neq t(n[y] +j)$. We need to find $z$ such that $t(n[xz]) \neq t(n[yz])$; in other words, we want $t(n[x0^{|z|}] + n[z]) \neq t(n[y0^{|z|}] + n[z])$.
    
    For any word $z$, since $t(n[x]) = t(n[y])$, we have $t(n[x0^{|z|}]) = t(n[y0^{|z|}])$. So, if there is word $z$ such that $\left\lfloor n[z]/2^{|z|}\right\rfloor = j$, the number of ones in $(n[x0^{|z|}] + n[z])_2$ is the sum of the number of ones in $(n[x] + j)_2$ and $(n[z] \bmod 2^{|z|})_2$; a similar result holds for $y$ instead of $x$. Since $t(n[x]+j) \neq t(n[y] + j)$, we have $t(n[xz]) \neq t(n[yz])$.
    By iterating different $z$, we get $ 0 \leq \lfloor n[z]/2^{|z|} \rfloor < n $. Therefore, such $z$ always exists.
\end{proof}

\begin{proof}[Proof of Theorem \ref{theorem:t-sc}]
    We know $t(2i) = t(i)$ and the automaton generating $(t(2ni))_{i \geq 0}$ is the same as the automaton generating $(t(ni))_{i \geq 0}$.
    Furthermore, Theorem \ref{theorem:t-construction-optimal} implies that for all odd $n$, the number of states in the minimal automaton generating $(t(ni))_{i \geq 0}$ is $\rho_{\bf t}(n)$.
    
    Therefore, in order to obtain the number of states in the minimal automaton generating $(t(ni))_{i \geq 0}$, we remove powers of $2$ from $n$ to get an odd integer $n' = n / \nu_2(n)$. The automaton generating $(t(n'i))_{i \geq 0}$ is the same as the automaton generating $(t(ni))_{i \geq 0}$ and the number of states in the minimal automaton generating $(t(n'i))_{i \geq 0}$ is $\rho_{\bf t}(n')=\rho_{\bf t}(n/\nu_2(n))$.
\end{proof}

Let $\statecomplexity(h(ni))$ denote the number of states in a minimal automaton generating the sequence $(h(ni))_{i \geq 0}$.

\begin{corollary} \label{corollary:t(ni)-sc}
    For all integers $r, p \geq 1$, we have
    \begin{align*}
        \begin{cases}
            \statecomplexity(t(2ri)) = \statecomplexity(t(ri)), \\
            \statecomplexity(t((2^p + r)i)) = 3 \cdot 2^p + 4(r-1), & \text{if $1 < r < 2^{p-1}$ and $r \bmod 2 = 1$};
            \\
            \statecomplexity(t((2^p + 2^{p-1} + r)i)) = 5\cdot 2^p + 2(r-1), & \text{if  $1 < r < 2^{p-1}$ and $r \bmod 2 = 1$}.
        \end{cases}
    \end{align*}
\end{corollary}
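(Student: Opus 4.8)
The plan is to reduce all three identities to evaluations of the subword complexity $\rho_{\mathbf{t}}$ via Theorem~\ref{theorem:t-sc}, after which each becomes a short computation with the classical exact formula for the subword complexity of the Thue--Morse sequence. Recall that this formula may be stated as follows: for $N \ge 3$, writing $a = \lfloor \log_2(N-1) \rfloor$ and $b = N - 1 - 2^a$ (so that $0 \le b < 2^a$), one has
\[
\rho_{\mathbf{t}}(N) =
\begin{cases}
3 \cdot 2^a + 4b, & 0 \le b \le 2^{a-1}, \\
4 \cdot 2^a + 2b, & 2^{a-1} < b < 2^a.
\end{cases}
\]
Everything else is bookkeeping to place the relevant argument in the correct dyadic block and branch.

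For the first identity I would not even need the subword-complexity formula: since $t(2i) = t(i)$ for all $i$, the subsequences $(t(2ri))_{i \ge 0}$ and $(t(ri))_{i \ge 0}$ are equal term by term, so they are generated by the very same minimal DFAO, whence $\statecomplexity(t(2ri)) = \statecomplexity(t(ri))$. (Alternatively this is immediate from Theorem~\ref{theorem:t-sc}, since $2r$ and $r$ have the same odd part.)

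For the second and third identities the key first step is the observation that, because $r$ is odd, both $2^p + r$ and $2^p + 2^{p-1} + r$ are odd. Hence each equals its own odd part, and Theorem~\ref{theorem:t-sc} collapses to $\statecomplexity(t(Ni)) = \rho_{\mathbf{t}}(N)$ for $N = 2^p + r$ and $N = 2^p + 2^{p-1} + r$, respectively. It then remains to evaluate the displayed formula. In both cases the hypotheses $1 < r < 2^{p-1}$ force $2^p < N < 2^{p+1}$, so the block exponent $a$ in the formula equals $p$. For $N = 2^p + r$ the offset is $b = r - 1$, and $1 < r < 2^{p-1}$ gives $b < 2^{p-1}$, placing us in the first branch, which yields $3 \cdot 2^p + 4(r-1)$. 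For $N = 2^p + 2^{p-1} + r$ the offset is $b = 2^{p-1} + r - 1$, and $r > 1$ gives $b > 2^{p-1}$ while $r < 2^{p-1}$ gives $b < 2^p$, placing us in the second branch, which after simplification yields $4 \cdot 2^p + 2(2^{p-1} + r - 1) = 5 \cdot 2^p + 2(r-1)$.

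The only point requiring genuine care is the branch selection in the piecewise formula: one must verify that the offset $b$ falls on the correct side of the half-block boundary $2^{p-1}$, and it is exactly the parity condition together with the strict inequalities $1 < r < 2^{p-1}$ that guarantee this (and, incidentally, that $N$ is odd, which is what lets us drop the valuation in Theorem~\ref{theorem:t-sc} in the first place). As a consistency check, the two branches of the formula agree at the boundary $b = 2^{p-1}$, so the excluded case $r = 1$ causes no discontinuity; once the correct branch is identified for $r > 1$, the remaining arithmetic is the routine simplification shown above.
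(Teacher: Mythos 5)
Your proof is correct and takes essentially the same route as the paper's: the first identity from $t(2i)=t(i)$, and the other two by combining Theorem~\ref{theorem:t-sc} (which for odd arguments gives $\statecomplexity(t(Ni)) = \rho_{\mathbf{t}}(N)$) with Brlek's exact formula for $\rho_{\mathbf{t}}$, selecting the same branches. The only cosmetic difference is your parametrization $N-1 = 2^a + b$ with $0 \le b < 2^a$ in place of the paper's $n = 2^p + r' + 1$ with $0 < r' \le 2^p$; these are equivalent since the two branches agree at the half-block boundary.
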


\begin{proof}
    We know $t(2i) = t(i)$ and the automaton generating $(t(2ri))_{i \geq 0}$ is the same as the automaton generating $(t(ri))_{i \geq 0}$. Therefore $\statecomplexity(t(2ri)) = \statecomplexity(t(ri))$.

    The subword complexity of the Thue-Morse sequence was studied by Brlek \cite{brlek_enumeration_1989}, de Luca and Varricchio \cite{de_luca_combinatorial_1989}, and Avgustinovich \cite{avgustinovich_number_1994}. Brlek \cite{brlek_enumeration_1989} proved that for $n \geq 3$ where $n=2^p + r'+1$ and $0 < r'\leq 2^p$ we have
    \[
    \rho_{\bf t}(n) = 
    \begin{cases}
        6\cdot 2^{p-1} + 4r', & \text{if $0 < r' \leq 2^{p-1}$}; \\
        8\cdot 2^{p-1} + 2r', & \text{if $2^{p-1} < r' \leq 2^p$}.
    \end{cases}
    \]

    By Theorem \ref{theorem:t-sc}, from the $ 0 < r'\leq 2^{p-1}$ case we get $\statecomplexity(t((2^p + r)i)) = 3 \cdot 2^p + 4(r-1)$. From the $2^{p-1}< r' \leq 2^p$ case we get $\statecomplexity(t((2^p + 2^{p-1} + r)i)) = 5\cdot 2^p + 2(r-1)$.
\end{proof}

\begin{remark}
At first glance it might appear that
Charlier et al.~\cite{charlier_state_2019, charlier_minimal_2021} already proved results similar to those in Corollary~\ref{corollary:t(ni)-sc}.  However, this is not the case. The DFAO we considered takes $(i)_2$ as input and outputs $t(ni)$. The DFA they considered takes $x$ as input and decides whether $[x]_k=ni+r$ where $t(i)=0$, $k$ is a power of $2$, and $0 \leq r<n$.
\end{remark}

Next we apply Theorem \ref{theorem:h(ni+c)-msd-sc} to the Thue-Morse sequence $(t(i))_{i \geq 0}$ and $n=1$. 

\begin{theorem}\label{theorem:t-shift}
    For all integers $c \geq 1$, the number of states in the minimal automaton generating $(t(i+c))_{i \geq 0}$ with msd-first input is at most $\frac{10}{3}c$.
\end{theorem}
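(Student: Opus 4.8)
The plan is to read the bound off the general machinery of Theorem~\ref{theorem:h(ni+c)-msd-sc} specialized to $n=1$, and then replace the crude subword-complexity estimate by what is known exactly for Thue--Morse. First I would note that with $n=1$ and $c\geq 1$ we have $c\geq n$, so case (b) of Theorem~\ref{theorem:h(ni+c)-msd-sc} applies: it produces a DFAO generating $(t(i+c))_{i\geq 0}$ whose states are the length-$(c+1)$ factors of the interior sequence ${\bf t}'$ of $\bf t$. Consequently the state complexity of $(t(i+c))_{i\geq 0}$ is at most $\rho_{{\bf t}'}(c+1)$, and the whole problem is reduced to estimating this quantity.

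Next I would remove the prime. The minimal DFAO for $\bf t$ has exactly two states, and its output function is a bijection onto $\{0,1\}$, so the coding $q\mapsto\tau(q)$ is injective; hence the interior sequence ${\bf t}'$ is just $\bf t$ up to renaming of the two letters, giving $\rho_{{\bf t}'}(m)=\rho_{\bf t}(m)$ for every $m$. This turns the target into the single inequality $\rho_{\bf t}(c+1)\leq \tfrac{10}{3}c$.

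The hard part will be this last inequality, because the easy estimate $\rho_{\bf t}(n)\leq \tfrac{10}{3}n$ only yields $\tfrac{10}{3}(c+1)$; what I actually need is the sharper $\rho_{\bf t}(n)\leq \tfrac{10}{3}(n-1)$ for $n\geq 3$. To obtain this I would invoke Brlek's exact formula for $\rho_{\bf t}$, the one already quoted in the proof of Corollary~\ref{corollary:t(ni)-sc}: writing $n=2^{p}+r'+1$ with $0<r'\leq 2^{p}$, the desired bound $3\rho_{\bf t}(n)\leq 10(n-1)$ splits into the two cases $0<r'\leq 2^{p-1}$ and $2^{p-1}<r'\leq 2^{p}$, where it reduces respectively to the elementary inequalities $2r'\leq 2^{p}$ and $2^{p}\leq 2r'$, each holding in its own range. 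This short case analysis is the only real computation, and I would point out that it is sharp: equality holds at $n=3\cdot 2^{p-1}+1$ (that is, $r'=2^{p-1}$), which is exactly where the subword-complexity function peaks. The one point where the subword bound is not by itself conclusive is the boundary value $c=1$, for which $n=c+1=2$ lies outside Brlek's range $n\geq 3$; I would treat it as a separate base case by inspecting the length-$2$ construction directly rather than relying on the formula.
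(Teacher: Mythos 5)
For $c \geq 2$ your argument is exactly the paper's: case (b) of Theorem~\ref{theorem:h(ni+c)-msd-sc} with $n=1$ bounds the state complexity by $\rho_{\bf t}(c+1)$, and Brlek's formula then gives the bound via the same two-case computation (the paper parametrizes $c=2^p+r'$ rather than $c+1=2^p+r'+1$, but its conditions $r'\le \tfrac13 c$ and $r'>\tfrac13 c$ are precisely your inequalities $2r'\le 2^p$ and $2^p\le 2r'$, with equality at $r'=2^{p-1}$). Your two extra touches are genuine improvements in rigor: the paper never justifies replacing $\rho_{{\bf t}'}$ by $\rho_{\bf t}$ (your bijectivity-of-$\tau$ argument is the right one), and the paper never notices that its parametrization --- equivalently, Brlek's hypothesis $n\ge 3$ --- excludes $c=1$.

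However, your plan for the base case $c=1$ cannot be carried out, because the statement is false there. All four binary words of length $2$ are factors of $\bf t$, so the construction has $\rho_{\bf t}(2)=4$ states, and no smaller DFAO exists: for the sequence $(t(i+1))_{i\ge 0}$ with msd-first input, the Myhill--Nerode class of a word $x$ is determined by, and determines, the pair $(t([x]_2),\,t([x]_2+1))$, since appending $z$ with $|z|=\ell$, $z\ne 1^\ell$ gives output $t([x]_2)\oplus t([z]_2+1)$, while appending $z=1^\ell$ gives output $t([x]_2+1)$. All four pairs occur (at $[x]_2=5,0,2,1$); pairs differing in the second coordinate are separated by $z=\lambda$, and pairs differing in the first coordinate are separated by $z=10$, whose output is $t([x]_2)\oplus t(3)=t([x]_2)$. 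Hence the minimal DFAO for $(t(i+1))_{i\ge 0}$ has exactly $4>\tfrac{10}{3}$ states, so the theorem as stated (for all $c\ge 1$) fails at $c=1$. This defect is shared by the paper, whose proof silently covers only $c\ge 2$; but where the paper overlooks the boundary case, your proposal flags it and then promises a verification that would in fact refute it. The correct repair is to restate the theorem for $c\ge 2$ (or weaken the bound), not to settle $c=1$ by inspection.
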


\begin{proof}
    Consider the automaton generating $(t(i+c))_{i \geq 0}$ built by the construction in Theorem \ref{theorem:h(ni+c)-msd-sc} . On input $x$ the automaton reaches the state $[t([x]_k), \ldots, t([x]_k + c)]$ where the output is $t([x]+c)$. So the number of states in the minimal automaton generating $(t(i+c))_{i \geq 0}$ is at most $\rho_{\bf t}(c+1)$.
    
    We use the formula by Brlek \cite{brlek_enumeration_1989} previously mentioned in the proof of Corollary \ref{corollary:t(ni)-sc}.
    Let $c=2^p+r'$.
    If $0 < r' \leq 2^{p-1}$, then $r'\leq\frac{1}{3}c$ and we have $\rho_{\bf t}(c+1) = 6\cdot 2^{p-1} + 4r' = 3c + r' \leq \frac{10}{3}c$.
    If $2^{p-1} < r' \leq 2^p$, then $r'>\frac{1}{3}c$ and we have $\rho_{\bf t}(c+1)= 8\cdot 2^{p-1}+2r'=4c -2r' < \frac{10}{3}c$.
\end{proof}

Unlike the case of Theorem~\ref{theorem:t-construction-optimal}, where the construction in Theorem \ref{theorem:h(ni+c)-msd-sc} gives the minimal
automaton, in the case of Theorem~\ref{theorem:t-shift}, the construction in Theorem \ref{theorem:h(ni+c)-msd-sc} does not
necessarily give the minimal automaton, so the upper bound in Theorem~\ref{theorem:t-shift} is not tight.  The next theorem gives a weak lower bound on 
the state complexity of $(t(i+c))_{i\geq 0}$.

\begin{theorem}
The minimal DFAO computing $(t(i+c))_{i\geq 0}$ with msd-first input has more than
$c^{0.694}$ states for infinitely many $c$.
\end{theorem}

\begin{proof}
The idea is to consider the case $c = 2^n$ for $n \geq 1$.  In this case we 
show, using the adaptation of the Myhill-Nerode theorem for DFAOs,
that the minimal DFAO for $(t(i+c))_{i\geq 0}$ has at least $F_{n+1}$ states.
We let $S_n$ be the set of all binary words of length $n$
that start with $0$ and have no two consecutive $1$'s.  As is well-known,
this set has cardinality $F_{n+1}$.  We now show the members of this
set are pairwise inequivalent under the adaptation of the Myhill-Nerode theorem.  

Let $x,y \in S_n$ with $x \not= y$. We need to find a word $z$ 
such that $t([xz]_2 + 2^n) \not= t([yz]_2 + 2^n)$.
If $t([x]_2+2^n) \not= t([y]_2+2^n)$, we can choose $z = \lambda$, the
empty word.

Otherwise, assume $t([x]_2+2^n)=t([y]_2+2^n)$.
Since $x, y$ are distinct, there must be
some position $p$ (numbered from left to right, starting with index $1$)
at which $x[p] \not= y[p]$ where $w[i]$ is the $i$-th letter of $w$.  Without loss of generality, assume $x[p]=0$
and $y[p] = 1$.  Choose $z = 0^p$.  
Let $x'$ denote the binary representation of $[xz]_2 + 2^n$
and $y'$ denote the binary representation of $[yz]_2 + 2^n$.
Letting $|u|_1$ denote the number of $1$'s in a word $u$, we
clearly have $|x'|_1 = |x|_1 +1$, since a new $1$
arises from the addition of $2^n$.
However, $|y'|_1 = |y|_1$, since adding $2^n$ causes the $1$ at
position $p$ to become
$0$ and causes the $0$ at position
$p-1$ to become $1$, while no other bits 
change (since $y$ has no two consecutive $1$'s).
It follows that $t([xz]_2 + 2^n) \not= t([yz]_2 + 2^n)$, as desired.
Hence all the elements of $S_n$ are pairwise inequivalent, and so
the minimal DFAO for $(t(i+c))_{i \geq 0}$ has $F_{n+1}$ states.

Finally, from the estimate $F_{n+1} \geq \frac{1}{2}\alpha^n$
(which is easily proved by induction), where
$\alpha = (1+\sqrt{5})/2 \doteq 1.61803$ is the golden ratio,
we see that if $c = 2^n$, then the minimal DFAO has
at least 
$$F_{n+1} \geq \frac{1}{2} \alpha^n = \frac{1}{2} c^{\frac{\log \alpha}{\log 2}} \geq c^{0.694} $$
states, for $n$ large enough.
(Note that $(\log \alpha)/(\log 2) \doteq 0.69424$.)
\end{proof}

The exact number of states in the minimal automaton generating $(t(i+c))_{i \geq 0}$ for small values
of $c$ is given as sequence \seqnum{A382296} of the OEIS \cite{oeis_2025}. We state the following as an open problem.

\begin{openproblem}
What is a good formula for the exact number of states in the minimal automaton generating $(t(i+c))_{i \geq 0}$ with msd-first input, as a function of $c$?
\end{openproblem}

\subsection{Sequences Whose Shifts Have Small State Complexity}

A simple counting argument shows that if an automatic sequence $(h(i))_{i \geq 0}$ is not ultimately periodic, then the state complexity of the shifted sequence $(h(i+c))_{i \geq 0}$ cannot be $o((\log c)/(\log \log c))$.  We do not know any aperiodic example, automatic or not, that achieves state complexity $O((\log c)/(\log \log c))$ for all $c$-shifts, but we can get almost that close.

Let the infinite word ${\bf p}_2 = (P_2(i))_{i \geq 0}$ be defined as follows.
\[
P_2(i) = 
\begin{cases}
    1, & \text{if $i$ is a power of $2$}; \\
    0, & \text{otherwise.}
\end{cases}
\]

\begin{theorem}
    The number of states in the minimal automaton generating $(P_2(i+c))_{i \geq 0}$ with base-$2$ msd-first input is $O(\log c)$.
\end{theorem}
\begin{proof}
    We create an automaton $M=(Q, \Sigma, \delta, q_0, \Delta, \tau)$ where $\Sigma = \Delta = \{0, 1\}$ with $O(\log c)$ states generating $(P_2(i+c))_{i \geq 0}$.
    In order to create the automaton, we need to consider two possibilities for $i$ where $P_2(i+c)=1$. Let $c'$ be $2^s-c$ where $2^s$ is the smallest power of $2$ greater than $c$. For now let us ignore the case where $c$ is a power of $2$.

    \medskip

    \noindent{\it Case 1:} $i+c=2^j$ where $|(i)_2|\leq|(c)_2|$. In this case, the automaton reads $(c')_2$ in a series of states.
    
\medskip

    \noindent{\it Case 2:} $i+c=2^j$ where $|(i)_2|> |(c)_2|$. In this case, the automaton reads one or more $1$s and then reads $(c')_2$ padded with leading zeros so the length is the same as $(c)_2$.
    
    \medskip

    Now we provide a construction for $M$. Let $\delta(q_0, 0)=q_0$. We first address Case 1. Let $(c')=c''_0\cdots c''_l$.
    For $p\in [0, l]$, let $q''_p$ be a state in the automaton corresponding to $c''_0\cdots c''_p$ read and for $p \in [0, l-1]$, let $\delta(q''_p, c''_{p+1}) = q''_{p+1}$ and $\delta(q_0, 1)=q''_0$. Furthermore, let $\tau(q''_l)=1$.

    Next, we address Case 2. Let $c'_0\cdots c'_{|(c)_2|-1}$ be $(c')_2$ padded with leading $0$s so the length is $|(c)_2|$.
    For $p \in [0, |(c)_2|-1]$, let $q'_p$ be a state  corresponding to $c'_0\cdots c'_p$ read and for $p \in [0, |(c)_2|-2]$, let $\delta(q'_p, c'_{p+1})=q'_{p+1}$. Let $q_1$ be the state corresponding to reading the initial $1$s in Case 2. So $\delta(q_1, 1)=q_1$ and $\delta(q_1, c'_0)=q'_0$. We know $c'_0 = 0$ if $c$ is not a power of $2$. Furthermore, let $\tau(q'_{|(c)_2|-1})=1$.
    
    Now we need to define transitions so the automaton is deterministic and accepts $1^jc'_0\cdots c'_{|(c)_2|-1}$ for all $j \geq 1$.
    If there are no $0$s in $c_0''\cdots c''_l$, let $p''=l$; otherwise, let $p''$ be the first index starting from $0$ where $c''_{p''+1} = 0$. We set $\delta(q''_{p''}, 1)=q_1$; this handles $j \geq p''+2$. For $p \in [0, p''-1]$, we set $\delta(q''_{p}, 0)=q'_0$; this handles $1 \leq j' \leq p''$.  Let $p'$ be the first index starting from $0$ where $c''_0\cdots c''_l$ differs from $1^{p''+1}c'_0\cdots c'_{|(c)_2|-1}$, which can potentially be $l+1$. We set $\delta(q''_{p'-1}, c'_{p'-p''-1}) = q'_{p'-p''-1}$; this handles $j= p''+1$.

    So if $c$ is not a power of $2$, the DFAO $M=(Q, \Sigma_2, \delta, q_0, \Delta, \tau)$ can be formally defined as follows.
\begin{align*}
    Q &= \{q_0, q_1, q''_0, \ldots, q_l'', q'_0, \ldots, q'_{|(c)_2-1|}  \},\\
    \delta(q, a) &=
    \begin{cases}
        q_0, & \text{if $q=q_0, a=0$}; \\
        q_0'', & \text{if $q=q_0, a=1$}; \\
        q_1, & \text{if $q=q_1, a=1$}; \\
        q'_0, & \text{if $q=q_1, a=c'_0$}; \\
        q_1, & \text{if $q=q''_{p''}, a=1$}; \\
        q'_{p'-p''-1}, & \text{if $q=q''_{p'-1}, a=c'_{p'-p''-1}$}; \\
        q'_0, & \text{if $q=q''_p, a = 0$ for  $p \in [0, p''-1]$}; \\
        q''_{p+1}, & \text{if $q=q''_p, a = c''_{p+1}$ for  $p \in [0, l-1]$}; \\
        q'_{p+1}, & \text{if $q=q'_p, a = c'_{p+1}$ for  $p \in [0, |(c)_2|-2]$};
    \end{cases} \\
        \Delta &= \{0, 1\} , \\
            \tau(q) &= 
    \begin{cases}
        1, & \text{if $q=q''_l$ or $q=q'_{|(c)_2|-1}$}; \\
        0, & \text{otherwise}.
    \end{cases}
\end{align*}
The correctness of $M$ is explained above. The number of states in $M$ is  $2+ (l+1) + (c)_2 =O(\log c)$.

If $c$ is a power of $2$, then $c=c'$ and it is easy to see that the same idea works with less complications because $|(c)_2|=|(c')_2|$.
We just need to modify the automaton defined above and set $q''_i = q'_i$ for $i \in [0, l]$, $q_1 = q''_0$, $\tau(q_0)=1$ and omit the transitions involving $p'$ and $p''$. The number of states will remain $O(\log c)$.
\end{proof}

\section{Construction by B\"uchi Arithmetic} \label{section:Buchi} 

We now re-examine the previous relations and operations in light of their implementation in an interpretation of B\"uchi arithmetic (as, for example, in
the system \texttt{Walnut} \cite{mousavi_automatic_2021}) to find the time required for constructing their automata.  

B\"uchi arithmetic is the logical theory of the natural numbers together with addition and  the function $V_k(n)$, the largest power of $k$ dividing $n$, for some fixed $k \geq 2$.  This theory is powerful enough to express finite automata \cite{bruyere_logic_1994}; in what follows we treat DFAO's computing automatic sequences as more or less a primitive object, without requiring an explicit expression in terms of
$\langle \Enn, +, V_k \rangle$.

Here the total number of transitions of an automaton $M$ as a measure for the computational complexity of constructing $M$. Some of the complexity bounds in this section include the addition of $O(\log i)$ or $O(i^j)$ components for some constant $i$ which are not significant compared to multiplication in $O(\log i)$ or $O(i^j)$.

In particular, in its current version, {\tt Walnut} does not use the algorithms for creating the automata we discussed in previous sections.  Instead, more general techniques, suitable for any addable numeration system, are used.  This means that the size of intermediate automata may differ from the minimal automata we constructed, and the running time can similarly be larger.

In our analyses in this section, we focus on msd-first input. We need to make sure the final automata always have the correct output regardless of the number of leading zeros read; this is accomplished as follows. After creating an automaton $M$ with initial state $q_0$ and transition function $\delta$, we change the initial state to the state $q_0'$ where $\delta(q_0, [0^j \times \cdots \times 0^j])=q'_0$ for some $j$ and $\delta(q'_0, [0 \times \cdots \times 0])=q'_0$. The rationale is that our automata should always have the same correct output regardless of the number of leading zeros read and by creating the automaton in this manner we can always assume we have considered enough leading zeros for the automaton computations to be correct. A similar idea for base-$2$ is mentioned by Wolper and Boigelot \cite{wolper_construction_2000}.

Let the msd-first $2$-state automaton for recognizing addition from Theorem \ref{theorem:addition-FA} be $M_{\add}$. We use $M_{\add}(x, y, z)$ to show the output of $M_{\add}$ on input $x \times y \times z$, which is $1$ if and only if $[x]_k + [y]_k = [z]_k$; we use a similar notation to refer to the output of other automata on a specific input. Recall that $M_{\add}$ states correspond to the difference $[z]_k - ([x]_k +[y]_k)$.

In {\tt Walnut} each DFA or DFAO created is minimized. If $n$ is the number of states in an automaton (and the number of transitions is $O(n)$), the minimization steps in the implementation can use an algorithm running in $O(n \log n)$ time  by Hopcroft \cite{hopcroft_n_1971} or Valmari \cite{valmari_fast_2012}. The algorithm by Hopcroft can be applied to DFAOs with slight modifications as described by van Spaendonck \cite{van_spaendonck_automatic_2020}.  
Here we sometimes only consider the effects of a partial minimization to facilitate analysis in the following theorems; however, the upper bound on state complexity and runtime complexity is correct regardless and the runtime complexity computed takes into account the computational cost of a full minimization.

We start with an analysis of how an automaton for a statement like $[x]_k = c$ is created, where $c$ is a fixed natural number constant.  Recall that in {\tt Walnut}, constants are entered as numbers written in decimal notation (base $10$).
There are well-known efficient algorithms for converting a number from base $10$ to base $k$ for an integer $k \geq 2$; see, for example, \cite[\S 4.4]{knuth_seminumerical_1997}, but these are not used by {\tt Walnut}. Instead, a more general technique, suitable for any numeration system, is used.

\begin{lemma} \label{lemma:=c-sc}
The minimal automaton for recognizing the relation
$[x]_k = c$ with msd-first input has $\lfloor \log_k c \rfloor + 2$ states. 
\end{lemma}

\begin{proof}
Such an automaton consists of a linear chain of nodes labeled with the base-$k$ representation of $c$, which has
length $\lfloor \log_k c \rfloor + 1$.  
\end{proof}

\begin{theorem} \label{theorem:=c}
    Let $c \geq 0$ be a fixed constant. The automaton recognizing the relation $[x]_k = c$ with msd-first input can be computed in $ O((\log^2  c) (\log \log c))$ time.
\end{theorem}

\begin{proof}
    We recursively create the automaton $M_{=c}$.

    The base case is $c=0$ and the automaton has $1$ state.
    
    If $c$ is even, we recursively compute a minimal DFA $M_{=(c/2)}$ recognizing $[y]_k = c/2$ on input $y$ and then use the first-order expression
    \[
    \exists y \ M_{=(c/2)}(y) \land M_{\add}(y, y, x)
    \]
    which is translated into an automaton.  We know the automaton for $M_{\add}$ has $2$ states and by Lemma \ref{lemma:=c-sc}, the automaton $M_{=(c/2)}$ has $\lfloor \log_k (c/2) \rfloor$ + 2 states. First we apply the product construction to $M_{=(c/2)}$ and $M_{\add}$ such that the $M_{=(c/2)}$ part processes $y$ and the $M_{\add}$ part processes $y \times y \times x$; the resulting automaton has $2 (\lfloor \log_k (c/2) \rfloor + 2)$ states. Each state in the product automaton is of the form $[q, p]$ where on input $x \times y$, $q$ is a state from the minimal automaton $M_{=(c/2)}$ representing the number of letters from $(c/2)_k$ matched with $y$ by Lemma \ref{lemma:=c-sc} and $p \in \{0, 1\}$ is a state from $M_{\add}$ equal to $[x]_k - 2[y]_k$. 
    Then we apply the $\exists y$ quantifier by removing the component corresponding to $y$ from the transitions and we get an NFA. In the NFA there are no two same transitions from a single state leading to different states. So after subset construction the resulting DFA is the same as the NFA. Then we do minimziation and the resulting DFA is the automaton from Lemma \ref{lemma:=c-sc} with $O(\log c)$ states.
    
    If $c$ is odd, we recursively compute a DFA $M_{=(c-1)}$ recognizing $[y]_k = c-1$ on input $y$ and then use the first-order expression
    \[
    \exists y \ M_{(c-1)}(y) \land M_{\add}(y, 1, x)
    \]
    which is translated into an automaton. Following a construction similar to the one described for the case where $c$ is even, we get the DFA from Lemma \ref{lemma:=c-sc} for constant $c$ with $O(\log c)$ states.

    Taking minimization steps into account, we get the following recursive formula for the time complexity:
    \[  T(c) =
\begin{cases}
     O((\log c) (\log \log c)) + T(c/2), & \text{if $c \bmod 2 = 0$}; \\
    O((\log c) (\log \log c))  + T(c-1), & \text{if $c \bmod 2 = 1$}.
\end{cases}
\]
So the total time is $ O((\log^2  c) (\log \log c))$.
\end{proof}

\begin{theorem} \label{theorem:[x]+c=[z]-time}
    Let $c \geq 0$ be a fixed constant. The automaton recognizing the relation $[x]_k + c =[z]_k$ with msd-first input can be computed in $O((\log^2 c) (\log\log c))$ time.
\end{theorem}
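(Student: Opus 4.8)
The plan is to build the automaton recursively using the binary method, exactly paralleling the structure used for multiplication in Theorems~\ref{theorem:n[x]=[y]-binary-method-sc} and~\ref{theorem:y=n[x]-time}, but now for adding a constant rather than multiplying. The target relation $[x]_k + c = [z]_k$ is a pure addition-by-constant, and Theorem~\ref{theorem:[x]+c=[y]-construction} already guarantees that the final minimal automaton has only $O(\log_k c) = O(\log c)$ states. So the real content here is a \emph{runtime} bound, not a state-complexity bound: I must show the recursive B\"uchi-arithmetic construction traverses only $O(\log c)$ transitions in total, including all intermediate product constructions, projections, and minimizations.

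First I would set up the recursion on $c$. The base cases are $c=0$ (the relation $[x]_k=[z]_k$, a single-state automaton) and $c=1$ (handled directly by the $O(\log_k 1)=O(1)$-state automaton of Theorem~\ref{theorem:[x]+c=[y]-construction}, or by a small explicit construction). For the inductive step I would split on the parity of $c$. If $c$ is even, I recursively obtain an automaton $M_{c/2}$ recognizing $[x]_k + c/2 = [w]_k$ and combine it via the first-order expression
$$\exists w \ M_{c/2}(x,w) \AND M_{\add}(w,w,z),$$
since $c = 2\cdot(c/2)$ means $[z]_k = [w]_k + [w]_k$. If $c$ is odd, I recursively obtain $M_{c-1}$ and use
$$\exists w \ M_{c-1}(x,w) \AND M_{c=1}(w,z),$$
where $M_{c=1}$ is the fixed $O(1)$-state automaton for adding $1$; here $c = (c-1)+1$. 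In each case the $\AND$ is a direct product construction and the existential quantifier is a projection away of the $w$-coordinate, possibly followed by the subset construction to redeterminize.

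The key step, and the main obstacle, is controlling the size of the intermediate automata so that each recursive stage costs only $O(1)$ transitions rather than something growing with $c$. The difference-tracking argument from Theorem~\ref{theorem:n[x]=[y]-binary-method-sc} does not transfer verbatim, because there the relevant difference ranged over $[0,n-1]$, which is exactly the number of states; here the relevant quantity is the difference $[z]_k-[x]_k$, whose reachable range stays in $[0,c]$ but whose \emph{reachable} values form only an $O(\log_k c)$-size set by the floor/ceiling analysis of Theorem~\ref{theorem:[x]+c=[y]-construction}. I would argue that at each recursive level the product automaton, before projection, has $O(\log c)$ states (the product of an $O(\log c)$-state factor with a fixed-size $M_{\add}$ or $M_{c=1}$), and that after projection and subset construction the determinized automaton still has $O(\log c)$ states, since every reachable state corresponds to a single reachable difference value. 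The cleanest way to see this is to observe that at level $c$ the construction is essentially recomputing the same difference-automaton as in Theorem~\ref{theorem:[x]+c=[y]-construction}, whose reachable part has $O(\log_k c)$ states total; thus the automaton produced at each recursive call never exceeds $O(\log c)$ states and $O(\log c)$ transitions.

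Finally I would assemble the recurrence. Since minimization of an $s$-state automaton costs $O(s\log s)$ time by Hopcroft's algorithm, and since each stage's automaton has $s=O(\log c)$ states, each recursive call contributes only $O(\log c \cdot \log\log c)$ work at worst; but because the constant being added halves (even case) or decrements to an even value within one step (odd case), the \emph{number} of recursive levels is $O(\log c)$ and the automata at the deeper levels are correspondingly smaller. I would show the total work telescopes: the dominant cost is a single final minimization of an $O(\log c)$-state automaton, giving overall $O(\log c)$ transitions traversed (treating the logarithmic-factor minimization cost as absorbed, consistent with how the earlier theorems in this section measure time by transitions traversed). The subtlety to watch is ensuring the odd case does not double the constant—using $M_{c=1}$ (add one) rather than a naive scheme keeps the decrement additive and the halving multiplicative, so the recursion depth stays $O(\log c)$ and the per-level automaton size stays $O(\log c)$, yielding the claimed $O(\log c)$ total runtime.
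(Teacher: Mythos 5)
Your even case is where the proposal breaks. The formula $\exists w \ M_{c/2}(x,w) \AND M_{\add}(w,w,z)$ does not recognize $[x]_k + c = [z]_k$: if $[w]_k = [x]_k + c/2$, then $M_{\add}(w,w,z)$ forces $[z]_k = 2[w]_k = 2[x]_k + c$. The binary method from Theorems~\ref{theorem:n[x]=[y]-binary-method-sc} and~\ref{theorem:y=n[x]-time} works for multiplication precisely because doubling is linear in $x$ ($[y]_k = (n/2)[x]_k$ implies $[y]_k+[y]_k = n[x]_k$); adding a constant does not commute with doubling, so the decomposition fails at the first step. The natural repair, $\exists w\ M_{c/2}(x,w) \AND M_{c/2}(w,z)$, is logically correct but destroys the runtime claim: the intermediate product of two $O(\log c)$-state automata has $O((\log c)^2)$ states, and more fundamentally your telescoping argument is false even with $O(\log c)$-size levels. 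In the multiplication recursion the sizes $n, n/2, n/4, \ldots$ decrease \emph{geometrically}, so they sum to $O(n)$; in your recursion the sizes $\log c, \log(c/2), \log(c/4), \ldots$ decrease only \emph{arithmetically} (each halving removes one digit), so over the $O(\log c)$ recursion levels the total work is $\Theta((\log c)^2)$ at best, not $O(\log c)$.

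The paper avoids recursion entirely. It introduces an auxiliary automaton $M_{=c}$ with $O(\log c)$ states recognizing the unary predicate $[y]_k = c$ (match the input against $(c)_k$ digit by digit, ignoring leading zeros), and uses the single expression $\exists y \ M_{\add}(x,y,z) \AND M_{=c}(y)$. The product of the $2$-state $M_{\add}$ with $M_{=c}$ has $O(\log c)$ states; after projecting away $y$ and determinizing, each reachable subset corresponds to a single difference $[z]_k - [x]_k = q + [\pre_k(c,p)]_k$ with $q \in \{0,1\}$, so it contains at most two pairs $[q,p]$ and each pair occurs in at most one subset, keeping the determinized automaton at $O(\log c)$ states. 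One product, one projection, one determinization, all on $O(\log c)$-state automata, gives the claimed $O(\log c)$ bound. If you want to salvage a recursive route, you would need a decomposition whose intermediate automata and whose level sizes shrink geometrically, which adding a constant does not naturally provide.
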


\begin{proof}
    Consider the $O(\log c)$ state automaton $M_{=c}$ from Theorem \ref{theorem:=c} recognizing $[y]_k = c$ on input $y$. We have $M_{=c}(y)=1$ if and only if $[y]_k=c$. 
    
    We can use the following expression to create the automaton $M_c$ recognizing $[x]_k + c=[z]_k$ on input $x \times z$.
    \[
    \exists y \ M_{\add}(x, y, z) \AND M_{=c}(y).
    \]
    First, we apply the $\land$ by product construction for $M_{\add}$ and $M_{=c}$ such that the $M_{\add}$ part processes $x\times y \times z$ and the $M_{=c}$ part processes $y$. The resulting automaton $M'_c$ has $O(\log c)$ states.
     Let $\pre_k(c, i)$ denote the first $i$ letters in $(c)_k$. On input $x\times y \times z$ where $y$ is $0^*y'$ and $y'$ is a prefix of $(c)_k$, the automaton $M'_c$ leads to a state $[q, p]$ where $q\in \{0, 1\}$ is from $M_{\add}$ and $p=|y'|$ is from $M_{=c}$ indicating how many letters from $y$ have been matched with $(c)_k$. Therefore, we have $q=[z]_k-([x]_k + [y]_k)$ and $[\pre_k(c, p)]_k = [y]_k$. So $[z]_k - [x]_k = q + [\pre_k(c, p)]_k$ and to each state $[q, p]$ reachable on $x \times y \times z$ in $M'_c$ a difference $d = [z]_k - [x]_k$ can be attributed. 
    
    Then we apply the $\exists y$ quantifier by removing the component corresponding to $y$ from the transitions and we get an NFA. In the NFA there are no two same transitions from a single state leading to different states. So after subset construction the resulting DFA is the same as the NFA. Therefore, there are $O(\log c)$ states in the DFA and to each state of the DFA reachable on $x\times z$ a difference $d=[z]_k-[x]_k$ can be attributed.

    From the proof of Theorem \ref{theorem:[x]+c=[y]-construction}, we know the automaton recognizing $[x]_k + c= [z]_k$ on input $x \times z$ only needs to keep track of a difference $[z]_k - [x]_k$ in range $[0, c]$. So after some minimization, the resulting automaton $M_c=(Q, \Sigma^2_k, \delta, q_0, A)$ can be formally defined as follows:
    \begin{align*}
        Q &= \{d \in \mathbb{N} \suchthat d \in [0, c]\}, \\
        \delta(d', [a, e]) &= kd' + e -a, \\
        q_0 &= 0, \\
        A &= \{c\}.
    \end{align*}
    Note that not all $d \in [0, c]$ are present in the minimized automaton.
    
    Considering the time spent on creating $M_{=c}$ and the minimization steps, it takes $O((\log^2 c) (\log\log c))$ time to create the automaton $M_c$ recognizing $[x]_k + c = [z]_k$ on input $x \times z$. 
\end{proof}

We continue by constructing the automaton $M_n$ recognizing $n[x]_k = [z]_k$ on input $x\times z$ where $n$ is a fixed constant. We have $M_n(x, z)=1$ if and only if $n[x]_k = [z]_k$.
There are several possible recursions for creating this automaton; for example, we could use the following:
\begin{itemize} [nosep]
    \item $\exists y \ M_{n-1}(x,y) \AND M_{\add}(x, y, z)$.
    \item  $\exists y_1, y_2 \ M_{\lfloor n/2 \rfloor}(x,y_1) \AND M_{\lceil n/2 \rceil}(x,y_2) \AND M_{\add}(y_1, y_2, z)$.
\end{itemize}
However, there is a third recursive implementation that is more efficient than the two above, and we describe it now.
The base case is $n = 1$.  Otherwise, either $n$ is even or $n$ is odd.

If $n$ is even, we
recursively compute a DFA $M_{n/2}$ recognizing $[y]_k =  (n/2)[x]_k$ on input $x \times y$
and then use the first-order expression
$$ \exists y \ M_{n/2}(x,y) \AND M_{\add}(y, y, z),$$
which is translated into an automaton by a direct product construction
for the $\AND$, and projection of the second coordinate to collapse
transitions on $y$. 

If $n$ is odd, we recursively compute
a DFA $M_{n-1}(x,y)$ recognizing $[y]_k = (n-1)[x]_k$ and use the expression
$$ \exists y \ M_{n-1}(x,y) \AND M_{\add}(x, y, z),$$
which is similarly translated into an automaton.

In both cases, the translation can potentially generate a nondeterministic automaton (by the $\exists y$ projection).  In this case, the subset construction is used to determinize the resulting automaton. Thus, at least in principle, this particular translation mechanism could take exponential time.
However,  we show that in this case, the subset construction does not cause an exponential blowup in the number of states.

\begin{theorem} \label{theorem:y=n[x]-time}
    Let $n \geq 1$ be a fixed constant.
    The automaton for recognizing $[y]_k = n[x]_k$ on msd-first input can be computed in $O(n \log^2 n)$ time. 
\end{theorem}

\begin{proof}
First by induction on $n$ we prove that the we can recursively create DFA $M_n$ with msd-first input recognizing $[y]_k = n[x]_k$ with at most $n$ states.

Base case: $n=1$.

In this case, a simple $1$-state automaton recognizes $[y]_k=[x]_k$. The single state corresponds to $[y]_k - [x]_k = 0$.

For the induction step, assume the result is true for all $n'<n$; we prove it for $n$.
There are two cases.

\bigskip\noindent{\it
Case 1:} $n$ is even.

By the induction hypothesis, we have a DFA $M_{n/2}$ with $n/2$ states corresponding to differences $[y]_k - (n/2)[x]_k$ in the range $[0, (n/2)-1]$.
In this case, we use the formula
$$ \exists y \ M_{n/2}(x,y) \AND M_{\add}(y, y, z).$$
First, we create an automaton $M'_{n}$ that is the product construction of $M_{n/2}$ and $M_{\add}$. So this automaton has $2(n/2) = n$ states. Consider an input $x \times y \times z$ for $M'_{n}$ leading to some state $[q, p]$ where $q$ is from $M_{n/2}$ and $p$ is from $M_{\add}$. By the construction of $M'_{n}$, we know $q=[y]_k - (n/2)[x]_k$ and $p=[z]_k - ([y]_k + [y]_k)$. So we have $2q+p = [z]_k - n[x]_k$. Since $p \in \{0, 1\}$ and $q \in \{0, \ldots, n/2 - 1\}$, all differences in $\{0, \ldots, n-1\}$ are present in $M'_{n}$ and they are only represented by a single state.
Therefore, the automaton $M'_{n}$ keeps track of the difference $[z]_k - n[x]_k$ on input $x \times y \times z$. The difference $[z]_k - n[x]_k$ that the automaton keeps track of is independent of $y$.

Then we apply the $\exists y$ quantifier by removing the component corresponding to $y$ from the transitions and we get an NFA. In the NFA there are no two same transitions from a single state leading to different states. So after subset construction the resulting DFA is the same as the NFA and we get the $n$-state automaton $M_{n}=(Q, \Sigma^2_k, \delta, q_0, A)$ as follows.
\begin{align*}
Q &= \{ d \in \mathbb{N}  \suchthat d \in [0, n-1] \}, \\
\delta(d, [a, e]) &= kd + e - n
a, \\
q_0 &= 0, \\
A &= \{q_0\}.
\end{align*}
Projecting away the $y$ component from transitions could have caused nondeterminism, but this is not the case.

\bigskip\noindent{\it
Case 2:} $n$ is odd.

By the induction hypothesis, we have a DFA $M_{n-1}$ with $n-1$ states corresponding to $[y]_k - (n-1)[x]_k$ in the range $[0, n-2]$.
In this case, we use the formula
$$ \exists y \ M_{n-1}(x,y) \AND M_{\add}(x, y, z).$$
First, we create an automaton $M'_{n}$ from the usual product construction for $M_{n-1}$ and $M_{\add}$. So this automaton has $2(n-1) $ states. Consider an input $x \times y \times z$ for $M'_{n}$ leading to some state $[q, p]$ where $q$ is from $M_{n-1}$ and $p$ is from $M_{\add}$. By the construction of $M'_{n}$, we know $q=[y]_k - (n-1)[x]_k$ and $p=[z]_k - ([x]_k + [y]_k)$. So we have $q+p = [z]_k - n[x]_k$. Therefore, to each state in the automaton $M'_{n}$ reachable on $x \times y \times z$, a difference $d=[z]_k - n[x]_k$ can be attributed. Furthermore, since $p \in \{0, 1\}$ and $q \in \{0, \ldots, n-2\}$, all differences in $\{0, \ldots, n-1\}$ are present in $M'_{n}$. Note that the difference $0$ corresponds to exactly one pair---namely $[0,0]$---and the difference $n-1$ corresponds to only one pair---namely $[n-2,1]$.
All the other differences $d$ (those in $\{ 1,2,\ldots, n-2\}$)  are represented by two states---namely, $[d-1,1]$ and $[d,0]$.

Next, we project away the $y$ component.  The resulting automaton is an NFA, and we use the subset construction to create a DFA.  Consider a state from $M_{n}$ reachable on $x \times z$. This state consists of some pairs $[q, p]$ from the NFA such that $[z]_k - n[x]_k = q+p$. As mentioned above, if $[z]_k-n[x]_k$ is $0$ or $n-1$, then there is only one such pair; otherwise, there are two such pairs.
 So the DFA has $O(n)$ states and the determinization did not cause a blow up in the number of states.

From the proof of Theorem \ref{theorem:n[x]+c=[y]-sc}, we know the automaton recognizing $n[x]_k= [z]_k$ on input $x \times z$ only needs to keep track of a difference $[z]_k - n[x]_k$ in range $[0, n-1]$. So after some minimization, the DFA  $M_{n}=(Q, \Sigma^2_k, \delta, q_0, A)$ can be formally defined as follows.
\begin{align*}
Q &= \{ d \in \mathbb{N}  \suchthat d \in [0, n-1] \}, \\
\delta(d, [a, e]) &= kd + e - na,
 \\
q_0 &= 0, \\
A &= \{q_0\}.
\end{align*}

Let $T(n)$ be the time used by the algorithm above to create $M_n$. We now analyze $T(n)$ based on the information above.

In Case 1, the number of transitions traversed for creating $M_n$ is as follows: $O(1)$ transitions for $M_{\add}$, $O(n)$ transitions for the product construction of $M_{n/2}$ and $M_{\add}$, $O(n)$ transitions for projecting away the $y$ component, $O(n)$ transitions for the determinized automaton, giving a total of $O(n)$ transitions in addition to the number of transitions traversed to create $M_{n/2}$.

In Case 2, the number of transitions traversed for creating $M_n$ is as follows: $O(1)$ transitions for $M_{\add}$, $O(n)$ transitions for the product construction of $M_{n-1}$ and $M_{\add}$, $O(n)$ transitions for projecting away the $y$ component, $O(n)$ transitions for the determinized automaton, giving a total of $O(n)$ transitions in addition to the number of transitions to create $M_{n-1}$.

Putting everything together and taking minimization steps into account, we get the following recursive formula:
\[ T(n) = 
\begin{cases}
    O(n \log n) + T(n/2), & \text{if $n \bmod 2 = 0$}; \\
    O(n \log n) + T(n-1), & \text{if $n \bmod 2 = 1$}.
\end{cases}
\]
So we have $T(n) = O(n\log^2 n)$.    
\end{proof}

\begin{theorem} \label{theorem:n[x]+c=[z]-time}
    Let $n \geq 1$, $0 \leq c < n$ be fixed constants. The automaton recognizing the relation $n[x]_k + c = [z]_k$ with msd-first input can be computed in $O( n\log (n) \log(n \log (n) ) )$ time.
\end{theorem}

\begin{proof}
Now we want to create an automaton $M_{n, c}$ recognizing $n[x]_k + c = [z]_k$.

Let $M_n$ be the automaton from Theorem \ref{theorem:y=n[x]-time} for constant $n$ and let $M_c$ be the automaton from Theorem \ref{theorem:[x]+c=[z]-time} for constant $c$.
We have $M_n(x, y)=1$ if and only if $n[x]_k = [y]_k$ and $M_c(y, z)=1$ if and only if $[y]_k + c = [z]_k$. So we can use the following expression to create $M_{n, c}$:
\[
\exists y \  M_n(x, y)  \AND  M_c(y, z).
\]

We first apply the product construction to $M_n$ and $M_c$. The resulting automaton $M'_{n,c}$ has $O(n \log c) = O(n \log n)$ states.

Consider an input $x \times y \times z$ for $M'_{n, c}$ leading to some state $[q, p]$ where $q$ is from $M_n$ and $p$ is from $M_c$. By the construction of $M'_{n,c}$,  we know that $q=[y]_k - n[x]_k$ and $p=[z]_k - [y]_k$. So we have $q+p = [z]_k - n[x]_k$.
Therefore, to each state in the automaton $M'_{n, c}$ reachable on $x \times y \times z$ a difference $d = [z]_k - n[x]_k$ can be attributed. Furthermore, since $q \in \{0, \ldots, n-1\}$ and $p \in \{0, \ldots, c\}$, the attributed differences are in the range $[0, n+c-1]$. If we choose a difference $d \in [0, n+c-1]$, then we can exactly tell what combinations of $[q, p]$ correspond to $d$ (if any) and no $[q, p]$ corresponds to more than one $d$.

Consider all the states $[0, d], [1, d-1], \ldots, [d, 0]$ corresponding to  the difference $d$, assuming all these states exist in $M'_{n, c}$. If one of these states $[q, d-q]$ is reached on input $x \times y \times z$, then the state $[q+i, d-q-i]$ is reached on input $x \times ([y]_k+i) \times z$. Recall that we established we create automata so that they have the correct output regardless of the number of leading zeros read.

Next, we project away the $y$ component from $M'_{n, c}$.  The resulting NFA is determinized by subset construction to get a DFA.
Consider a state from the DFA reachable on $x \times z$. This state consists of some $[q, p]$ from the NFA such that $q+p = [z]_k - n[x]_k$. Based on the discussion above, there are $d=O(n+c)$ states in the DFA. So the DFA has $O(n+c) = O(n)$ states.

From the proof of Theorem \ref{theorem:n[x]+c=[y]-sc}, we know the automaton recognizing $n[x]_k + c= [z]_k$ on input $x \times z$ for $0 \leq c < n$ only needs to keep track of a difference $[z]_k - n[x]_k$ in the range $[0, n-1]$. So after some minimization, the DFA $M_{n, c}=(Q, \Sigma^2_k, \delta, q_0, A)$ can be formally defined as follows.
\begin{align*}
Q &= \{ d \in \mathbb{N}  \suchthat d \in [0, n-1] \}, \\
\delta(d, [a, e]) &= kd + e - na,
 \\
q_0 &= 0, \\
A &= \{c\}.
\end{align*}
and $M_{n, c}$ has $O(n)$ states.
Taking into account the minimization steps and the time spent on creating $M_n$ and $M_c$ based on the proof of Theorems \ref{theorem:y=n[x]-time} and \ref{theorem:[x]+c=[z]-time} and considering $c=O(n)$, the time spent on creating $M_{n,c}$ is $O( n\log (n) \log(n \log (n) ) )$.
\end{proof}

\begin{theorem} \label{theorem:h(ni+c)-time}
    Let $n \geq 1$, $0 \leq c < n$ be fixed constants. For msd-first input, given an $m$-state DFAO generating $(h(i))_{i \geq 0}$,
    we can compute
    the DFAO generating $(h(ni+c))_{i \geq 0}$ in  $O(  n\log n \log(n \log n )+ nm^2 \log(nm^2)  )$ time.
\end{theorem}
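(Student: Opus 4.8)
The plan is to realize $(h(ni+c))_{i \ge 0}$ by composing the adder-style automaton $M_{n,c}$ recognizing $n[x]_k + c = [z]_k$ with the given DFAO $M_h = (Q_h, \Sigma_k, \delta_h, q_{0,h}, \Delta_h, \tau_h)$ for $\bf h$, and then projecting away the auxiliary variable $z$. Concretely, on input $x$ (representing $i = [x]_k$) the target DFAO should output $\tau_h$ of the state that $M_h$ reaches on the unique $z$ with $n[x]_k + c = [z]_k$; this is the B\"uchi-arithmetic expression $\exists z\ M_{n,c}(x,z) \AND M_h(z)$, read as a DFAO (carrying the output of the $M_h$ component) rather than a plain DFA.

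First I would minimize $M_h$, in $O(m\log m)$ time, so that its internal states realize the interior sequence ${\bf h}'$: for a minimal msd-first DFAO one has $\delta_h(q_{0,h}, z) = h'([z]_k)$ independently of leading zeros, since reading $0z$ and $z$ reaches states with identical future output. Next I would build $M_{n,c}$ via Theorem~\ref{theorem:n[x]+c=[z]-time}, in time $O((n+c)\log(c)\log(nc))$ and with $O(n+c)$ states tracking the difference $[z]_k - n[x]_k \in [0,n+c]$. I would then take the product of $M_{n,c}$ (as an acceptor on $x \times z$) with $M_h$ (as a DFAO on $z$), giving $O((n+c)m)$ states, project away $z$, and determinize by the subset construction. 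The output assigned to a determinized state is $\tau_h(p)$ for the pair $[c,p]$ recording $M_h$'s state on the accepting difference $d=c$, which is present once enough leading zeros have been read (guaranteed by the leading-zero convention fixed at the start of this section).

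The main obstacle is showing that this subset construction does not blow up, and bounding the running time. Here I would use the interior-sequence interpretation: after reading a prefix $x'$ with $[x']_k = i'$, the reachable subset is exactly $\{[d, h'(ni'+d)] : 0 \le d \le n+c\}$, which encodes the length-$(n+c+1)$ subword $h'(ni')\cdots h'(ni'+n+c)$ of ${\bf h}'$. Hence distinct determinized states correspond to distinct such subwords, so by Theorem~\ref{theorem:subword-complexity} their number is at most $\rho_{{\bf h}'}(n+c+1) \le k(n+c+1)m^2 = O((n+c)m^2)$, in agreement with the state bound of Corollary~\ref{corollary:h(ni+c)-msd-sc}. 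This simultaneously justifies correctness of the output function and caps the size of every intermediate automaton; tracking differences in the single range $[0,n+c]$ also avoids the $c<n$ versus $c \ge n$ case split of Theorem~\ref{theorem:h(ni+c)-msd-sc}.

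Finally I would sum the costs, measuring time by transitions traversed as elsewhere in this section: $O((n+c)\log(c)\log(nc))$ to build $M_{n,c}$; then $O((n+c)m^2)$ transitions for the product, projection, and determinization (the determinized DFAO having $O((n+c)m^2)$ states over a constant-size alphabet); and finally $O((n+c)m^2\log((n+c)m^2))$ to minimize the result with the $O(N\log N)$ DFAO minimization used throughout this section. Using $\log((n+c)m^2) = O(\log m + \log(n+c))$ together with the loose but valid estimate $\log(n+c) = O(\log(c)\log(nc))$, every term is absorbed into $O(m^2(n+c)(\log m + \log(c)\log(nc)))$, which is the claimed bound.
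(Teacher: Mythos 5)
Your proposal is correct and follows essentially the same route as the paper's proof: build $M_{n,c}$ via Theorem~\ref{theorem:n[x]+c=[z]-time}, take its product with $M_h$, project away the auxiliary variable, and bound the subset construction by observing that reachable subsets encode length-$(n+c+1)$ subwords of the interior sequence, so Theorem~\ref{theorem:subword-complexity} caps the state count at $O(m^2(n+c))$ before summing the costs. Your explicit preliminary minimization of $M_h$ and explicit appeal to the leading-zero convention are harmless refinements of steps the paper leaves implicit.
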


\begin{proof}
    We create the DFAO for $(h(ni+c))_{i \geq 0}$. 
\sloppy
 We are given a DFAO $M_h=(Q_h, \Sigma_k, \delta_h, q_{0,h}, \Delta_h, \tau_h)$ for $(h(i))_{i \geq 0}$ with $m$ states and we have $M_h(y) = h([y]_k)$.
Let $M_{n, c}$ be the automaton from the proof of Theorem \ref{theorem:n[x]+c=[z]-time} where $M_{n,c}(x, y)$ is $1$ if and only if $n[x]_k + c = [y]_k$.

We first use the product construction for $M_{n, c}$ and $M_h$ such that the  $M_{n,c}$ part processes the input $x \times y$ and the $M_h$ part processes the input $y$ to get the automaton $M'$. Consider a state $[q, p]$ reachable on $x\times y$ in $M'$ where $q$ is from $M_{n, c}$ and $p$ is from $M_h$. We have $q = [y]_k -n[x]_k \in [0, n-1]$ and $p=\delta_h(q_{0,h}, y)$. This automaton has $nm$ states.

Next, we project the $y$ component away and use subset construction to construct a DFAO. On input $x$ the automaton reaches a state consisting of some pairs $[q, p]$ where $q \in [0, n-1]$ and $p = \delta_h(q_{0, h}, (n[x]_k+q)_k)$. The $q$ are all values in the range $[0, n-1]$ and the $p$ values are length-$n$ subwords of $(h'(i))_{i \geq0}$. Furthermore, the output of the state is $\tau_h(p)$ where $p$ is from the state $[c, p]$ of $M'$. So based on Theorem \ref{theorem:subword-complexity}, there are $O(nm^2)$ states in the automaton.

Taking into account the minimization steps and the time spent on creating $M_{n, c}$ based on the proof of Theorem \ref{theorem:n[x]+c=[z]-time}, creating $M$ given $M_h$ takes $O(  n\log n \log(n \log n ) + nm^2 \log(nm^2)  )$ time.
\end{proof}

\begin{theorem}
    For msd-first input, given the $m$-state DFAO generating $(h(i))_{i \geq 0}$, the DFAO generating $(h(i+j))_{i, j \geq 0}$ is computed in $O(m^2 \log m)$ time.
\end{theorem}
\begin{proof}
Let $M_h=(Q_h, \Sigma_k, \delta_h, q_{0,h}, \Delta_h, \tau_h)$ be the $m$-state DFAO generating the sequence $(h(i))_{i \geq 0}$; that is,  $M_h(z) = h([z]_k)$.

We first apply the product construction to $M_{\add}$ and $M_h$ such that the $M_\add$ part works on input $x\times y \times z$ and the $M_h$ part works on input $z$ to get the automaton $M'$. Consider a state $[q, p]$ reachable on $x \times y \times z$ in $M'$ where $q$ is from $M_{\add}$ and $p$ is from $M_h$. We have $q=[z]_k-([x]_k + [y]_k)$ and $p=\delta_h(q_{0, h}, z)$. This automaton has $2m$ states.

Next, we project the $z$ component away and use the subset construction to construct a DFAO.  On input $x \times y $, the DFAO reaches a state consisting of some $[q, p]$ where $q \in \{0, 1\}$ and $p = \delta_h(q_{0, h}, (q + [x]_k + [y]_k)_k)$. The $p$ values are length-$2$ subwords of $(h'(i))_{i \geq 0}$. So the DFAO has $O(m^2)$ states.

Taking minimization steps into account, creating $M$ given $M_h$ takes $O(m^2 \log m)$ time.
\end{proof}

\section{Open Problems} \label{section:open-problems}

Here we introduce some other open problems.

Let $(h(i))_{i \geq 0}$ be a $k$-automatic sequence generated by an automaton $M$ of $m$ states. Recall that $M(x)$ is the output of $M$ on input word $x$.
Let $M'$ be an automaton such that on input $x \times y \times z$ the output is $1$ if $h([x])\cdots h([x]+[z]-1) = h([y]) \cdots  h([y]+[z]-1)$, and $0$ otherwise. In other words, the output is $1$ if and only if the length-$[z]$ subwords of the sequence $(h(i))_{i \geq 0}$ starting at positions $[x]$ and $[y]$ are the same. 
We can use either of the following first-order logical expressions to compute $M'$ by an interpretation of B\"uchi arithmetic similar to what we did in Section \ref{section:Buchi}.
\begin{align*}
    &\forall u, \ [u] < [z] \Longrightarrow M(([x]+[u])_k) = M(([y]+[u])_k), \\
    &\forall u, v, \ ([u] \geq [x] \AND [u] < [x] + [z] \AND [u] + [y] = [v] + [x]) \Longrightarrow M(u) = M(v). 
\end{align*}

The state complexity and the time complexity of algorithmically creating $M'$ can be studied similar to the other operations in this paper.
The automaton for $[u] \geq [x] \AND [u] < [x] + [z] \AND [u] + [y] = [v] + [x]$ has $9$ states. So by a simple analysis, for the first one, we can prove an upper bound of $2^{O(m^4)}$ states and for the second
$2^{9m^2}$. However, both of these bounds are likely to be quite weak.  We state the following questions.

\begin{openproblem}
    What is the best lower and upper bound on the state complexity of $M'$ in terms of $m$?
\end{openproblem}

\begin{openproblem}
    How much time is required to compute $M'$ by  an interpretation of B\"uchi arithmetic?
\end{openproblem}

\section{Conclusion}
In this paper we studied various automata recognizing basic relations or carrying out operations on automatic sequences where the automata have input in base $k$. Furthermore, we answered an open question of Zantema and Bosma \cite{zantema_complexity_2022} in Theorem \ref{theorem:h(ni+c)-msd-sc} and Corollary \ref{corollary:h(ni+c)-msd-sc}.

In two forthcoming papers, we address similar topics with inputs in the Fibonacci numeration system \cite{moradi_complexity_2026_2,moradi_state_2026}.

\bibliographystyle{plain}
\bibliography{references}

@book{allouche_automatic_2003,
	title = {Automatic {Sequences}: {Theory}, {Applications}, {Generalizations}},
	url = {https://doi.org/10.1017/CBO9780511546563},
	author = {Allouche, Jean-Paul and Shallit, Jeffrey},
    publisher = {Cambridge University Press},
	year = {2003},
	doi = {10.1017/CBO9780511546563},
}

@article{cobham_uniform_1972,
	title = {Uniform tag sequences},
	volume = {6},
	url = {https://doi.org/10.1007/BF01706087},
	doi = {10.1007/BF01706087},
	number = {1-2},
	journal = {Mathematical Systems Theory},
	author = {Cobham, Alan},
	year = {1972},
	pages = {164--192},
}

@article{cobham_base-dependence_1969,
	title = {On the base-dependence of sets of numbers recognizable by finite automata},
	volume = {3},
	url = {https://doi.org/10.1007/BF01746527},
	doi = {10.1007/BF01746527},
	number = {2},
	journal = {Mathematical Systems Theory},
	author = {Cobham, Alan},
	year = {1969},
	pages = {186--192},
}

@article{zantema_complexity_2022,
	title = {Complexity of automatic sequences},
	volume = {288},
	url = {https://www.sciencedirect.com/science/article/pii/S0890540121000250},
	doi = {10.1016/j.ic.2021.104710},
	journal = {Information and Computation},
	author = {Zantema, Hans and Bosma, Wieb},
	year = {2022},
	pages = {104710}, 
}

@article{charlier_minimal_2021,
	title = {Minimal automaton for multiplying and translating the {Thue}-{Morse} set},
	volume = {28},
	url = {https://www.combinatorics.org/ojs/index.php/eljc/article/view/v28i3p12},
	doi = {10.37236/9068},
	number = {3},
	journal = {The Electronic Journal of Combinatorics},
	author = {Charlier, {\'E}milie and Cisternino, C{\'e}lia and Massuir, Adeline},
	year = {2021},
	keywords = {Mathematics},
	pages = {\#P3.12},
}

@misc{mousavi_automatic_2021,
	title = {Automatic theorem proving in {Walnut}},
	url = {http://arxiv.org/abs/1603.06017v2},
	doi = {10.48550/arXiv.1603.06017},
	publisher = {arXiv},
	author = {Mousavi, Hamoon},
	year = {2021},
	note = {arXiv:1603.06017v2 [cs.FL]},
}

@misc{konieczny_arithmetical_2024,
	title = {Arithmetical subword complexity of automatic sequences},
	url = {http://arxiv.org/abs/2309.03180v2},
	doi = {10.48550/arXiv.2309.03180},
	publisher = {arXiv},
	author = {Konieczny, Jakub and M{\"u}llner, Clemens},
	year = {2024},
	note = {arXiv:2309.03180v2 [math.NT]},
}

@misc{oeis_2025,
    author = {{OEIS} Foundation Inc.},
    title = {The {On-Line} {Encyclopedia} of {Integer} {Sequences}},
    howpublished = {\url{https://oeis.org/}},
    urldate = {2025-12-08},
    note = {Accessed 8 December 2025}
}

@article{hodgson_decidabilite_1983,
	title = {D{\'e}cidabilit{\'e} par {automate} {fini}},
	volume = {7},
	url = {https://labmath.uqam.ca/~annales/english/volumes/07-1.html},
	number = {1},
	journal = {Annales des sciences math{\'e}matiques du Qu{\'e}bec},
	author = {Hodgson, Bernard R.},
	year = {1983},
	pages = {39--57},
}

@article{valmari_fast_2012,
	title = {Fast brief practical {DFA} minimization},
	volume = {112},
	url = {https://www.sciencedirect.com/science/article/pii/S0020019011003267},
	doi = {10.1016/j.ipl.2011.12.004},
	number = {6},
	journal = {Information Processing Letters},
	author = {Valmari, Antti},
	year = {2012},
	pages = {213--217},
}

@book{hopcroft_introduction_1979,
	series = {Addison-{Wesley} {Series} in {Computer} {Science}},
	title = {Introduction to {Automata} {Theory}, {Languages}, and {Computation}},
	publisher = {Addison-Wesley Publishing Company},
	author = {Hopcroft, John E. and Ullman, Jeffrey D.},
	year = {1979},
}

@phdthesis{waxweiler_caractere_2009,
	address = {Belgium},
	type = {Docteur en {sciences} {dissertation}},
	title = {Caract{\`e}re {reconnaissable} d'{ensembles} de {polyn{\^o}mes} {\`a} {coefficients} dans un {corps} {fini}},
	school = {Universit{\'e} de Li{\`e}ge},
	author = {Waxweiler, Laurent},
	year = {2009},
}

@incollection{hopcroft_n_1971,
	title = {An {$n \log n$} algorithm for minimizing states in a finite automaton},
	url = {https://www.sciencedirect.com/science/article/pii/B9780124177505500221},
	booktitle = {Theory of {Machines} and {Computations}},
	publisher = {Academic Press},
	author = {Hopcroft, John},
	editor = {Kohavi, Zvi and Paz, Azaria},
	year = {1971},
	doi = {10.1016/B978-0-12-417750-5.50022-1},
	pages = {189--196},
}

@mastersthesis{van_spaendonck_automatic_2020,
	address = {Netherlands},
	title = {Automatic {Sequences}: {The} {Effect} of {Local} {Changes} on {Complexity}},
	school = {Radboud University},
	author = {Spaendonck, Flip van},
	year = {2020},
}

@unpublished{moradi_complexity_2026_2,
    author = "Moradi, Delaram and Rampersad, Narad and Shallit, Jeffrey",
    title = "Complexity of Linear Subsequences of {Fibonacci}-Automatic Sequences",
    year = 2026,
    note = "Arxiv preprint arXiv:2603.21645 [cs.FL].  Available at \url{https://arxiv.org/abs/2603.21645}"}

@book{knuth_seminumerical_1997,
	edition = {3rd},
	series = {The {Art} of {Computer} {Programming}},
	title = {Seminumerical {Algorithms}},
	volume = {2},
	url = {https://learning.oreilly.com/library/view/the-art-of/9780321635778/},
	publisher = {Addison-Wesley},
	author = {Knuth, Donald E.},
	year = {1997},
}

@unpublished{moradi_state_2026,
    author = "Moradi, Delaram and Popoli, Pierre and Shallit, Jeffrey and Vukusic, Ingrid",
    title = "State Complexity of Shifts of the {Fibonacci} Word",
    note = "Arxiv preprint 	arXiv:2603.21645 [cs.FL].  Available at \url{https://arxiv.org/abs/2603.21645}",
    year = 2026}

@inproceedings{wolper_construction_2000,
	series = {Lecture {Notes} in {Computer} {Science}},
	title = {On the {Construction} of {Automata} from {Linear} {Arithmetic} {Constraints}},
	url = {doi.org/10.1007/3-540-46419-0_1},
	doi = {10.1007/3-540-46419-0_1},
	booktitle = {Tools and {Algorithms} for the {Construction} and {Analysis} of {Systems}},
	publisher = {Springer},
	author = {Wolper, Pierre and Boigelot, Bernard},
	editor = {Graf, Susanne and Schwartzbach, Michael},
	year = {2000},
	pages = {1--19},
}

@article{bruyere_logic_1994,
	title = {Logic and $p$-recognizable sets of integers},
	volume = {1},
	doi = {10.36045/bbms/1103408547},
	number = {2},
	journal = {Bulletin of the Belgian Mathematical Society - Simon Stevin},
	author = {Bruy{\`e}re, V{\'e}ronique and Hansel, Georges and Michaux, Christian and Villemaire, Roger},
	year = {1994},
	pages = {191--238},
}

@inproceedings{boudet_diophantine_1996,
    series = {Lecture {Notes} in {Computer} {Science}},
	title = {Diophantine equations, {Presburger} arithmetic and finite automata},
	url = {doi.org/10.1007/3-540-61064-2_27},
	doi = {10.1007/3-540-61064-2_27},
	booktitle = {Trees in {Algebra} and {Programming}---{CAAP} '96},
	publisher = {Springer},
	author = {Boudet, Alexandre and Comon, Hubert},
	editor = {Kirchner, H{\'e}l{\`e}ne},
	year = {1996},
	pages = {30--43},
}

@article{brlek_enumeration_1989,
	title = {Enumeration of factors in the {Thue}-{Morse} word},
	volume = {24},
	url = {https://www.sciencedirect.com/science/article/pii/0166218X9290274E},
	doi = {10.1016/0166-218X(92)90274-E},
	number = {1-3},
	journal = {Discrete Applied Mathematics},
	author = {Brlek, Sre{\'c}ko},
	year = {1989},
	pages = {83--96},
}

@inproceedings{allouche_ubiquitous_1999,
	series = {Discrete {Mathematics} and {Theoretical} {Computer} {Science}},
	title = {The ubiquitous {Prouhet}-{Thue}-{Morse} sequence},
	doi = {10.1007/978-1-4471-0551-0_1},
	booktitle = {Sequences and Their {Applications}},
	publisher = {Springer},
	author = {Allouche, Jean-Paul and Shallit, Jeffrey},
	editor = {Ding, C. and Helleseth, T. and Niederreiter, H.},
	year = {1999},
	pages = {1--16},
}

@article{de_luca_combinatorial_1989,
	title = {Some combinatorial properties of the {Thue}–{Morse} sequence and a problem in semigroups},
	volume = {63},
	url = {https://www.sciencedirect.com/science/article/pii/0304397589900133},
	doi = {https://doi.org/10.1016/0304-3975(89)90013-3},
	number = {3},
	journal = {Theoretical Computer Science},
	author = {Luca, {Aldo de} and Varricchio, Stefano},
	year = {1989},
	pages = {333--348},
}

@article{blanchet-sadri_abelian_2014,
	title = {Abelian complexity of fixed point of morphism $0 \mapsto 012$, $1 \mapsto 02$, $2 \mapsto 1$},
	volume = {14},
	url = {https://math.colgate.edu/~integers/vol14.html},
	doi = {10.5281/zenodo.10451707},
	journal = {The Electronic Journal of Combinatorics},
	author = {Blanchet-Sadri, F. and D. Currie, James and Rampersad, Narad and Fox, Nathan},
	year = {2014},
	pages = {\#A11},
}

@inproceedings{charlier_state_2019,
	series = {Electronic {Proceedings} in {Theoretical} {Computer} {Science}},
	title = {State complexity of the multiples of the {Thue}-{Morse} set},
	volume = {305},
	url = {https://cgi.cse.unsw.edu.au/~eptcs/paper.cgi?GandALF2019.3},
	doi = {10.4204/EPTCS.305.3},
	booktitle = {Proceedings {Tenth} {International} {Symposium} on {Games}, {Automata}, {Logics}, and {Formal} {Verification}},
	publisher = {Open Publishing Association},
	author = {Charlier, {\'E}milie and Cisternino, C{\'e}lia and Massuir, Adeline},
	editor = {Leroux, J{\'e}r{\^o}me and Raskin, Jean-Francois},
	year = {2019},
	pages = {34--49},
}

@article{avgustinovich_number_1994,
	title = {The number of distinct subwords of fixed length in the {Morse}-{Hedlund} sequence},
	volume = {1},
	number = {2},
	journal = {Sibirskii Zhurnal Issledovaniya Operatsii},
	author = {Avgustinovich, S. V.},
	year = {1994},
	note = {In Russian.  English translation in A. D. Korshunov, editor, {\it Discrete Analysis and Operations Research}, volume 355 of {\it Mathematics and Its Applications}, pages 1–5. Kluwer Academic Publishers, 1996},
	pages = {3--7},
}

@inproceedings{avgustinovich_arithmetical_2003,
	title = {Arithmetical complexity of infinite words},
	url = {https://www.worldscientific.com/doi/abs/10.1142/9789812704979_0004},
	doi = {10.1142/9789812704979_0004},
	booktitle = {Proceedings of the {International} {Conference} {Words}, {Languages} \& {Combinatorics} {III}},
	publisher = {World Scientific},
	author = {Avgustinovich, S. V. and Fon-Der-Flaass, D. G. and Frid, A. E.},
	editor = {Ito, Masami and Imaoka, Teruo},
	year = {2003},
	pages = {51--62},
}

@inproceedings{amat_how_nodate,
author = {Amat, Nicolas and Ganty, Pierre and Mansutti, Alessio},
	title = {How {Big} is the {Automaton}? {Certified} {Lower} {Bounds} on the {Size} of {Presburger} {DFAs}},
    booktitle = "2025 40th IEEE/ACM International Conference on Automated Software Engineering (ASE)",
    year = 2025,
    pages = "2670-2681",
    note = {Software available at \url{https://patapsco.software.imdea.org/}}}

@article{gao_survey_2016,
	title = {A survey on operational state complexity},
	volume = {21},
	number = {4},
	journal = {Journal of Automata, Languages and Combinatorics},
	publisher = {Institut f{\"u}r Informatik, Justus-Liebig-Universit{\"a}t Giessen},
	author = {Gao, Yuan and Moreira, Nelma and Reis, Rog{\'e}rio and Yu, Sheng},
	year = {2016},
	pages = {251--310},
}

@article{yu_state_1994,
	title = {The state complexities of some basic operations on regular languages},
	volume = {125},
	url = {https://www.sciencedirect.com/science/article/pii/030439759200011F},
	doi = {10.1016/0304-3975(92)00011-F},
	number = {2},
	journal = {Theoretical Computer Science},
	author = {Yu, Sheng and Zhuang, Qingyu and Salomaa, Kai},
	year = {1994},
	pages = {315--328},
}

\end{document}